\newtheorem{theorem}{Theorem}[section]
\newtheorem{lemma}[theorem]{Lemma}
\newtheorem{proposition}[theorem]{Proposition}
\newtheorem{definition}[theorem]{Definition}
\newtheorem{corollary}[theorem]{Corollary}
\newtheorem{remark}[theorem]{Remark}
\newcommand{\R}{\mathbb{R}}
\newcommand{\op}[1]{\operatorname{#1}}
\newcommand{\norm}[1]{\left\lVert#1\right\rVert}
\newcommand{\maxnorm}[1]{\norm{#1}_{\max}}
\newcommand{\dgm}{\op{dgm}}
\newcommand{\dis}{\op{dis}}
\newcommand{\D}{\op{d_N}}
\newcommand{\dB}{\op{d_B}}
\newcommand{\dGH}{\op{d_{GH}}}
\newcommand{\VR}[2]{\op{VR}(#1,#2)}
\newcommand{\sig}[2]{\sigma_{#1}^{#2}}
\newcommand{\JL}{\mathrm{JL}}
\newcommand{\diam}{\op{diam}}
\newcommand\restr[2]{{
  \left.\kern-\nulldelimiterspace 
  #1 
  \vphantom{\big|} 
  \right|_{#2} 
  }}
\DeclareMathOperator*{\argmin}{arg\,min}
\definecolor{addblue}{rgb}{0.1,0,0.8}
\newcommand{\add}[1]{\textcolor{addblue}{#1}}
\definecolor{darkgrn}{rgb}{0, 0.75, 0}
\date{}
\begin{document}

\title{{\bfseries A Normalized Bottleneck Distance on Persistence Diagrams and Homology Preservation under Dimension Reduction}}

\author[1]{Nathan H.~May}
\author[1]{Bala Krishnamoorthy\thanks{kbala@wsu.edu}}
\author[1]{Patrick Gambill}
\affil[1]{Department of Mathematics and Statistics, Washington State University, USA}


\maketitle

\begin{abstract}
  Persistence diagrams (PDs) are used as signatures of point cloud data.
  Two clouds of points can be compared using the bottleneck distance $\dB$ between their PDs.
  A potential drawback of this pipeline is that point clouds sampled from topologically similar manifolds can have arbitrarily large $\dB$ when there is a large scaling between them.
  This situation is typical in dimension reduction frameworks.
  
  We define, and study properties of, a new scale-invariant distance between PDs termed normalized bottleneck distance, $\D$.
  In defining $\D$, we develop a broader framework called metric decomposition for comparing finite metric spaces of equal cardinality with a bijection.
  We utilize metric decomposition to prove a stability result for $\D$ by deriving an explicit bound on the distortion of the bijective map.
  We then study two popular dimension reduction techniques, Johnson-Lindenstrauss (JL) projections and metric multidimensional scaling (mMDS), and a third class of general biLipschitz mappings.
  We provide new bounds on how well these dimension reduction techniques preserve homology with respect to $\D$.
  For a JL map $f:X \to f(X)$, we show that $\D(\dgm(X),\dgm(f(X))) < \epsilon$ where $\dgm(X)$ is the Vietoris-Rips PD of $X$, and pairwise distances are preserved by $f$ up to the tolerance  $0 < \epsilon < 1$.
  For mMDS, we present new bounds for $\dB$ and $\D$ between PDs of $X$ and its projection in terms of the eigenvalues of the covariance matrix.
  And for $k$-biLipschitz maps, we show that $\D$ is bounded by the product of $(k^2-1)/k$ and the ratio of diameters of $X$ and $f(X)$.
  Finally, we use computational experiments to demonstrate the increased effectiveness of using the normalized bottleneck distance for clustering sets of point clouds sampled from different shapes.

  \medskip
  \noindent {\bfseries Keywords:} metric decomposition, bottleneck distance, persistence diagrams,\\
  \hspace*{0.7in} Johnson-Lindenstrauss projection, metric multidimensional scaling.
\end{abstract}

\clearpage
\section{Introduction}
Persistent homology has matured into a widely used and powerful tool in topological data analysis (TDA) \cite{EdMo2013}.
A typical TDA pipeline starts with point cloud data (PCD) assumed to be sampled from a manifold along with a specified distance metric.
The Vietoris-Rips (VR) persistence diagram of such a data set allows us to observe ``holes'' in data, which could have implications for the application generating the data \cite{Ca2009,CoSh2009}.
Going one step further, we can compare two different data sets by directly comparing the bottleneck distance between their persistence diagrams (PDs) \cite[Chap.~5]{ChdeSGlOu2016}.
The bottleneck distance can be computed efficiently, and also satisfies standard notions of stability \cite{ChdeSOu2014}.

A drawback of this TDA pipeline for comparing data sets is that PCDs sampled from homeomorphic manifolds can have an arbitrarily large bottleneck distance between their PDs when there is a large degree of scaling.
This situation is illustrated in Figure \ref{fig:SclProbIllst} on two pairs of PCDs where all four data sets are sampled from noisy circles except that there is is a large degree of scaling between the second pair of PCDs.
At a first look, all four PCDs look quite similar---each represent points sampled from a noisy circle.
One has to pay close attention to the axes scales in the second blue PCD (in the second row) to notice the difference in scales.

\begin{figure}[ht!]
  \includegraphics[height=1.18in]{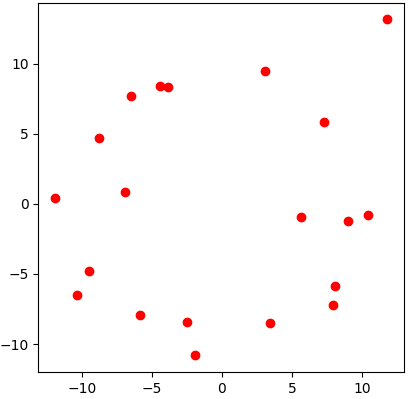}
  \includegraphics[height=1.18in]{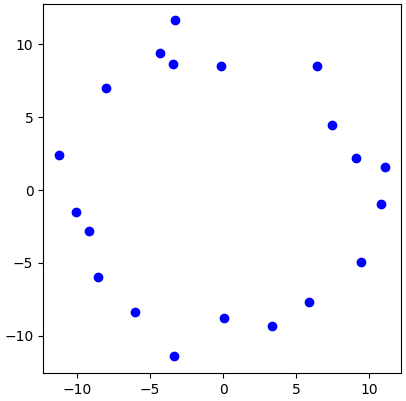}
  \includegraphics[height=1.18in]{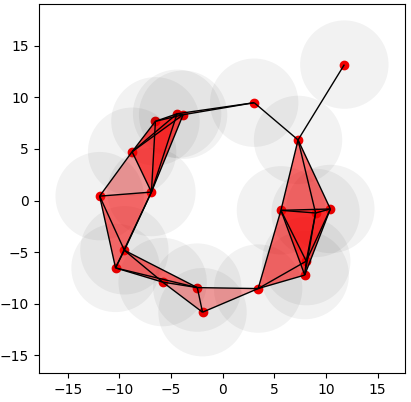}
  \includegraphics[height=1.15in]{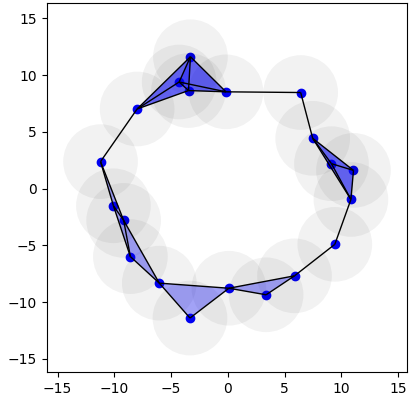}
  \hfill
  \includegraphics[height=1.1in]{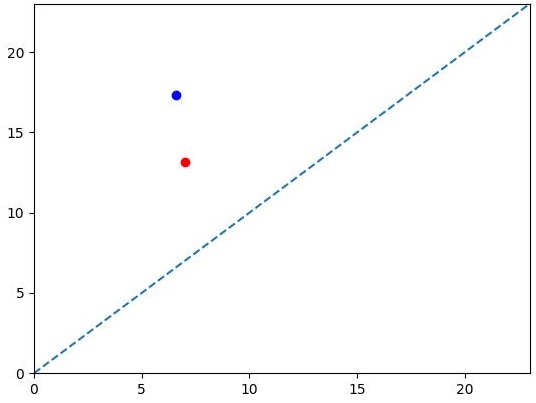}\\
  \hspace*{-0.1in}
  \includegraphics[height=1.1in]{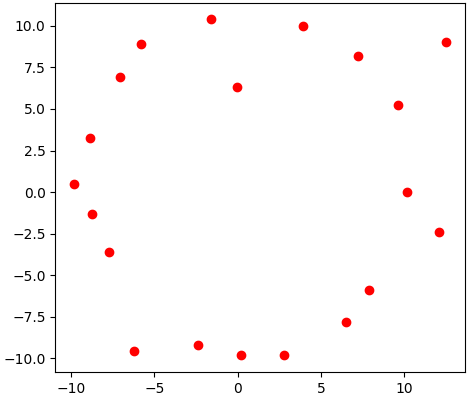}
  \includegraphics[height=1.1in]{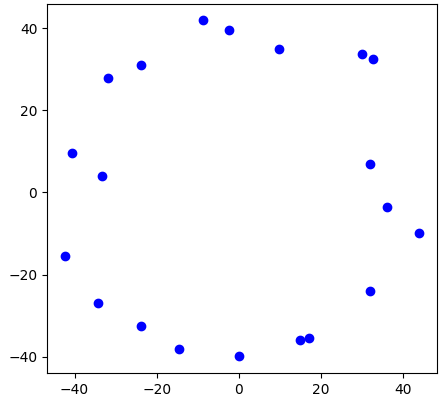}
  \includegraphics[height=1.1in]{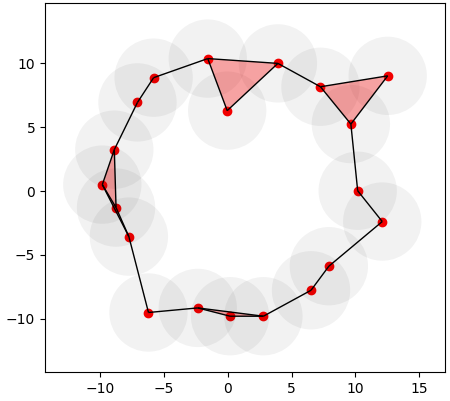}
  \includegraphics[height=1.1in]{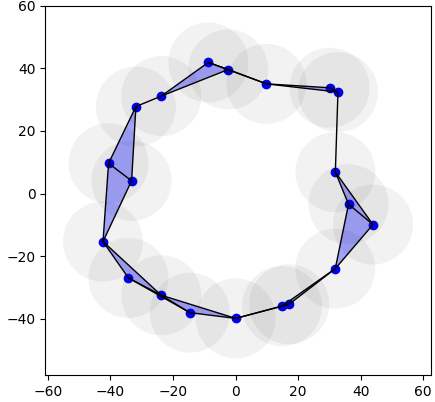}
  \hfill
  \includegraphics[height=1.1in]{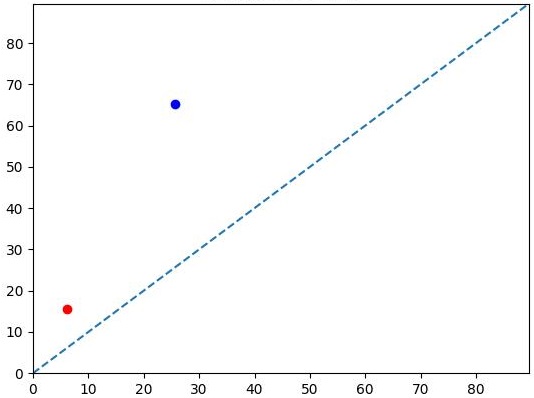}
  \caption{  \label{fig:SclProbIllst}
    Two pairs of point clouds, one without scaling (top row) and another with a large amount of scaling (bottom row), and their 1D persistence diagrams ($H_1$).
    The red and blue point clouds in the top row (first two figures) are both sampled from noisy circles in the $[-10,10] \times [-10,10]$ box.
    The third and fourth figures show the births of the hole in each case at diameters of around $7$.
    The first (red) point cloud in the bottom row is similar to the ones in the top row, but the second (blue) point cloud is sampled from a noisy circle in the $[-80,80] \times [-80,80]$ box.
    Its hole feature is born at a diameter of around $26$.
    The bottleneck distance between the pair of PDs is small in the top row, but can be quite large depending on the degree of scaling of the second (blue) point cloud in the bottom row.
  }
\end{figure}

Such scalings may have to be handled explicitly in certain applications,
e.g., in manufacturing of machine parts whose scales determine their actual sizes or in astronomy when comparing orbits of bodies around the sun where the scales are linked to their periods of revolution.
On the other hand, such scalings often show up as a natural result of algorithmic pipelines that transforms the first PCD into the second.
In fact, this is a typical scenario when the second PCD is produced by applying dimension reduction (DR) to the first PCD.
For a direct illustration, see Figure \ref{fig:Hom:UMAP_Saddle} (in Page \pageref{fig:Hom:UMAP_Saddle}) to observe how UMAP \cite{McHeMe2020}, a popular nonlinear DR method applied under default settings, shrinks a 3D loop forming a saddle boundary in a $200 \times 200 \times 100$ box to nearly a unit circle in the 2D plane.
While the topology of the saddle boundary is clearly preserved by the UMAP reduction by capturing its loop structure, standard bottleneck distance between the PDs of the input and the output representations cannot be constrained by any reasonably small bound.

We consider the following motivating questions:
Can we define a \emph{scale-invariant} bottleneck distance between PDs?
Can we derive tighter stability bounds on this distance than ones known for the standard bottleneck distance?
And can we use this new distance and associated stability bounds to capture how well standard DR techniques may preserve homology?


\subsection{Our Contributions}

We define a new scale-invariant distance between persistence diagrams termed \emph{normalized bottleneck distance}, $\D$, and study its properties (see Definition \ref{defn:Hom:dN}).
On top of being scale-invariant (see Theorem \ref{thm:Hom:dN_scaleIn}), $\D$ is a pseudometric for persistence diagrams, and also has the computational advantage of not needing to directly compute an optimal scaling/dilation factor.
  In defining $\D$, we also develop a broader framework called metric decomposition for comparing finite metric spaces of equal cardinality that also have a bijection defined between them.
  We utilize this metric decomposition to prove a stability result for $\D$
  (we write $\D(X,Y)$ in short for $\D(\dgm(X),\dgm(Y))$ and similarly for $\dB$):
  {
    \renewcommand{\thetheorem}{\ref{thm:Hom:dN_stabil}}
    \begin{theorem} 
      $ \D(X,Y) \leq \frac{2\norm{\Delta}}{\diam(Y)}\,$,
    \end{theorem}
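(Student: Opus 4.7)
The plan is to apply the metric decomposition framework to isolate a pure scaling factor relating $X$ and $Y$ from a residual distortion term $\Delta$, and then exploit the scale-invariance of $\D$ (Theorem \ref{thm:Hom:dN_scaleIn}) to reduce the statement to a direct application of the classical bottleneck stability theorem applied to an optimally rescaled copy of $X$.

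First, I would invoke the metric decomposition associated with the bijection $\phi : X \to Y$ to write the distance matrix of $X$ as the sum of a uniformly scaled copy of the distance matrix of $Y$ (pulled back through $\phi$) and a residual matrix $\Delta$. The scaling factor $\alpha$ captures the best uniform rescaling that aligns the two metrics, while $\norm{\Delta}$ quantifies the shape discrepancy that no uniform scaling can remove. Next, I would use scale invariance to replace $X$ by $\alpha X$; by Theorem \ref{thm:Hom:dN_scaleIn}, this substitution leaves $\D(X,Y)$ unchanged, but now the entrywise difference between the distance matrices of $\alpha X$ and $Y$ equals precisely $\Delta$.

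With scaling removed, the setup is exactly what the classical stability theorem expects: $\dB(\dgm(\alpha X), \dgm(Y))$ is controlled by the distortion of the bijection $\phi$ between $\alpha X$ and $Y$, which by the previous step equals $\norm{\Delta}$. Feeding this into the bijection form of the Chazal--de Silva--Oudot stability bound yields $\dB(\dgm(\alpha X), \dgm(Y)) \leq 2\norm{\Delta}$. Finally, I would convert this $\dB$ bound into a $\D$ bound by dividing through by $\diam(Y)$, consistent with the diameter normalization built into the definition of $\D$, giving the stated $\frac{2\norm{\Delta}}{\diam(Y)}$.

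The main obstacle is bookkeeping rather than insight: I must verify that the residual $\Delta$ delivered by the metric decomposition is the same object that bounds the distortion of $\phi$ between $\alpha X$ and $Y$, and I must check that the diameter normalization in the definition of $\D$ produces exactly $1/\diam(Y)$ rather than a quantity involving $\diam(\alpha X)$ or a symmetrized diameter. Once these identifications are pinned down, the chain of inequalities telescopes to the claimed stability estimate.
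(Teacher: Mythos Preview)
Your final step has a genuine gap, and it is precisely the step you flag as ``bookkeeping.'' From $\dB(\dgm(\alpha X),\dgm(Y)) \le 2\norm{\Delta}$ you propose to obtain $\D(X,Y)\le 2\norm{\Delta}/\diam(Y)$ by dividing by $\diam(Y)$. But by definition
\[
\D(\alpha X, Y)=\dB\!\left(\frac{\alpha X}{\diam(\alpha X)},\frac{Y}{\diam(Y)}\right),
\]
and Theorem~\ref{thm:Hom:dB_scaling} lets you pull out a common scalar only when \emph{both} arguments are scaled by the \emph{same} factor. Since $\diam(\alpha X)\neq\diam(Y)$ in general (their difference can be as large as $\norm{\Delta}$ by the reverse triangle inequality), there is no identity of the form $\D(A,B)=\dB(A,B)/\diam(B)$, and your division is not justified. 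The obstacle you identify is therefore the heart of the argument, not a formality.

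The paper's proof never passes through $\dB(\alpha X,Y)$ at all. Instead it bounds $\D(X,Y)$ directly by the distortion of the already-normalized bijection $\hat f: X/\norm{D_X}\to Y/\norm{D_Y}$, and then substitutes the decomposition $D_Y=sD_X+\Delta_s$ inside the expression
\[
\left|\frac{d_X(a,b)}{\norm{D_X}}-\frac{s\,d_X(a,b)+\delta_s(a,b)}{\norm{D_Y}}\right|.
\]
Clearing denominators and applying the triangle and reverse triangle inequalities produces two terms, each bounded by $\norm{\Delta_s}\norm{D_X}$, whence the factor $2\norm{\Delta_s}/\norm{D_Y}$. That algebra is exactly what absorbs the mismatch between $\diam(\alpha X)$ and $\diam(Y)$, and it is where the constant $2$ comes from. (A minor side point: your description reverses the direction of the decomposition; in the paper $\Delta_s=D_Y-sD_X$ decomposes $Y$ in terms of $X$, so it is $sX$, not $\alpha X$ with $D_X=\alpha D_Y+\Delta$, whose distance matrix differs from $D_Y$ by $\Delta_s$.)
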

  }
\noindent where $\Delta$ represents the optimal metric decomposition of $Y$ in terms of $X$ for finite metric spaces $X,Y$ of same cardinality with a bijection between them (see Definition \ref{defn:Hom:met_scaling}) and $\diam(Y)$ is the diameter of $Y$.
We derive this stability result by explicitly bounding the distortion of the associated bijective map, and hence the Gromov-Hausdorff distance between $X$ and $Y$ (see Theorem \ref{thm:Hom:BnStb} and Definition \ref{defn:Hom:GH}).

We then study two popular dimension reduction techniques, Johnson-Lindenstrauss (JL) projections and metric multidimensional scaling (mMDS), and a third class of general biLipschitz mappings.
We provide new bounds on how well these dimension reduction techniques preserve homology with respect to $\D$.
For a JL map $f$ taking input $X$ to $f(X)$, we derive the following new bounds on $\dB$ and $\D$:
{
  \renewcommand{\thetheorem}{\ref{thm:Hom:JL_dB}}
  \begin{theorem}
    $\dB(X,f(X)) <\epsilon\op{diam}(X) \,$
  \end{theorem}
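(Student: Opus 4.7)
The plan is to derive the bound by composing two well-known facts: the defining pairwise-distance preservation property of a JL map, and the standard stability of Vietoris-Rips persistence diagrams under the distortion of a bijection between the underlying finite metric spaces. The only real work is to convert the multiplicative error provided by JL into an additive error controlled by $\op{diam}(X)$.

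First, I would unpack the JL guarantee: for every pair $x, x' \in X$,
\[
\bigl|\,\norm{f(x)-f(x')} - \norm{x-x'}\,\bigr| \le \epsilon\,\norm{x-x'}.
\]
Since $\norm{x-x'} \le \op{diam}(X)$ for every such pair, the right-hand side is uniformly bounded by $\epsilon\,\op{diam}(X)$. This is exactly the distortion bound $\op{dis}(f) \le \epsilon\,\op{diam}(X)$ for the canonical bijection $f : X \to f(X)$.

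Second, I would appeal to the standard VR persistence stability result, $\dB(\dgm(X), \dgm(f(X))) \le 2\,\dGH(X, f(X))$, combined with the elementary observation that any bijection $\phi$ induces a correspondence whose Gromov-Hausdorff cost is at most $\tfrac12 \op{dis}(\phi)$. Taking $\phi = f$ and substituting the distortion bound from the previous step yields $\dB(X, f(X)) \le \epsilon\,\op{diam}(X)$.

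The argument is essentially a straight-line composition of these two ingredients, so there is no serious obstacle; the only delicate point is the strict inequality stated in the theorem. Pairs $x, x'$ with $\norm{x-x'} < \op{diam}(X)$ automatically contribute strictly less than $\epsilon\,\op{diam}(X)$ to the distortion of $f$, so equality could only arise if the worst JL pair simultaneously achieves $\op{diam}(X)$ and saturates the JL inequality. Under the standard strict form of the JL guarantee for random projections, this degenerate scenario does not occur, which yields the strict inequality as claimed.
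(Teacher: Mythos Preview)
Your approach is essentially the same as the paper's: bound the distortion of the bijection $f$ by $\epsilon\,\op{diam}(X)$, then feed this into the chain $\dB \leq 2\dGH \leq \op{dis}(f)$. The paper packages the first step as its Lemma~\ref{lem:Hom:JL_lem} and is a bit less explicit about invoking Theorem~\ref{thm:Hom:BnStb} and Definition~\ref{defn:Hom:GH} for the second, but the skeleton is identical.

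There is one small but substantive gap. You take the \emph{linear} JL inequality
\[
\bigl|\,\norm{f(x)-f(x')}-\norm{x-x'}\,\bigr|\le\epsilon\,\norm{x-x'}
\]
as the starting point, but in this paper the JL property is defined on \emph{squared} distances (Equation~(\ref{eq:JLdef})). The paper does the conversion explicitly: take square roots to get factors $\sqrt{1\pm\epsilon}$, then use $1-\epsilon<\sqrt{1-\epsilon}$ and $\sqrt{1+\epsilon}<1+\epsilon$ (strict for $0<\epsilon<1$) to land on the linear inequality with \emph{strict} bounds. That strictness, together with finiteness of $X$, is exactly what yields the strict inequality $\op{dis}(f)<\epsilon\,\op{diam}(X)$ and hence the strict $\dB$ bound. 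So the squared-to-linear step you skipped is not just bookkeeping---it is the clean source of the strict inequality, and it makes your final paragraph (appealing to ``the standard strict form of the JL guarantee for random projections'' and a non-degeneracy argument about the diameter pair) unnecessary.
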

}
\noindent and
{
  \renewcommand{\thecorollary}{\ref{cor:Hom:JL_dN}}
  \begin{corollary}
    $\D(X,f(X))< \epsilon$\,,
  \end{corollary}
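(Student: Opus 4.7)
The plan is to derive the corollary as a short consequence of Theorem \ref{thm:Hom:JL_dB} combined with the stability and scale-invariance properties of $\D$ already established in the paper. The overall strategy is to pass from the raw bottleneck bound to the normalized one by dividing by an appropriate diameter, while making sure the normalization is tight enough to preserve the clean constant $\epsilon$ on the right-hand side.

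First I would invoke Theorem \ref{thm:Hom:JL_dB} directly to get $\dB(\dgm(X),\dgm(f(X))) < \epsilon\,\diam(X)$. Next I would interpret the JL map through the metric decomposition framework of Definition \ref{defn:Hom:met_scaling}: the bijection between $X$ and $f(X)$ induced by $f$ pairs points canonically, and the JL guarantee $(1-\epsilon)\|x-x'\| \leq \|f(x)-f(x')\| \leq (1+\epsilon)\|x-x'\|$ means the perturbation matrix $\Delta$ has entries bounded in absolute value by $\epsilon\,d_X(x,x')$, hence $\norm{\Delta}$ is controlled by a factor of $\epsilon$ times $\diam(X)$. Feeding this into Theorem \ref{thm:Hom:dN_stabil} yields $\D(X,f(X)) \leq 2\norm{\Delta}/\diam(f(X))$, and combining with the JL lower bound $\diam(f(X)) \geq (1-\epsilon)\diam(X)$ produces a bound of the advertised order.

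The delicate point, and the main obstacle I anticipate, is making the final constant come out to a strict $\epsilon$ rather than some loosened quantity like $\epsilon/(1-\epsilon)$ that a naive plug-in into the stability theorem would give. To handle this I plan to use the scale-invariance of $\D$ (Theorem \ref{thm:Hom:dN_scaleIn}) to first rescale $f(X)$ so that its diameter coincides with $\diam(X)$ before measuring $\D$; this turns the multiplicative mismatch between $\diam(X)$ and $\diam(f(X))$ into an adjustment of the metric decomposition $\Delta$ itself, so that the denominator in Theorem \ref{thm:Hom:dN_stabil} matches the $\diam(X)$ already appearing in Theorem \ref{thm:Hom:JL_dB}. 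After this rescaling, the division by $\diam(X)$ is exact and the bound $< \epsilon$ follows immediately from the strict inequality already provided by Theorem \ref{thm:Hom:JL_dB}.
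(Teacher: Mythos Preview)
The paper itself offers essentially no proof of this corollary: it is stated immediately after Theorem~\ref{thm:Hom:JL_dB} with only the remark that ``this is not an issue with the normalized bottleneck since our spaces are normalized.'' The intended reading appears to be simply that once $X$ is replaced by $X/\diam(X)$ (which has diameter $1$), Theorem~\ref{thm:Hom:JL_dB} yields a bound $<\epsilon$. Your route through the metric decomposition $\Delta$ and Theorem~\ref{thm:Hom:dN_stabil} is therefore considerably more elaborate than anything the paper does.

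More importantly, your final step does not close the gap you yourself identify. Rescaling $f(X)$ to $Y'=f(X)\cdot\diam(X)/\diam(f(X))$ does make the two diameters coincide, and scale-invariance gives $\D(X,f(X))=\D(X,Y')=\dB(X,Y')/\diam(X)$. But you then appeal to Theorem~\ref{thm:Hom:JL_dB} ``immediately,'' and that theorem bounds $\dB(X,f(X))$, not $\dB(X,Y')$. The map $x\mapsto (\diam(X)/\diam(f(X)))\,f(x)$ is \emph{not} a JL map with the same parameter $\epsilon$: from $\sqrt{1-\epsilon}\,d_X\le d_{f(X)}\le\sqrt{1+\epsilon}\,d_X$ and $\sqrt{1-\epsilon}\le\diam(f(X))/\diam(X)\le\sqrt{1+\epsilon}$ one only gets
\[
\sqrt{\tfrac{1-\epsilon}{1+\epsilon}}\;d_X \;\le\; d_{Y'} \;\le\; \sqrt{\tfrac{1+\epsilon}{1-\epsilon}}\;d_X,
\]
so the effective distortion constant after rescaling is $\sqrt{(1+\epsilon)/(1-\epsilon)}-1$, which exceeds $\epsilon$ for every $\epsilon\in(0,1)$. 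The same degradation occurs if you instead push the rescaling into $\|\Delta\|$ and use Theorem~\ref{thm:Hom:dN_stabil}: optimizing over $s$ gives $\|\Delta\|\le\tfrac{1}{2}(\sqrt{1+\epsilon}-\sqrt{1-\epsilon})\diam(X)$, and dividing by $\diam(f(X))\ge\sqrt{1-\epsilon}\,\diam(X)$ again lands you at $\sqrt{(1+\epsilon)/(1-\epsilon)}-1$, not $\epsilon$.

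In short, your plan correctly diagnoses the obstacle but the rescaling trick does not resolve it: it merely relocates the $(1-\epsilon)^{-1}$ factor from the denominator into the numerator via a worsened JL constant. Any argument that truly recovers the sharp strict bound $<\epsilon$ must supply something beyond Theorem~\ref{thm:Hom:JL_dB} and Theorem~\ref{thm:Hom:dN_stabil} as currently stated; the paper does not provide such an argument either, and appears to be content with the informal normalization remark.
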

}
\noindent where $0 < \epsilon < 1$ is the tolerance up to which pairwise distances are preserved by $f$ (see Equation (\ref{eq:JLdef}) for the formal definition).
For mMDS, we present new bounds for both $\dB$ and $\D$ between persistence diagrams of $X$ and its projection in terms of the eigenvalues of the covariance matrix:
{
  \renewcommand{\thecorollary}{\ref{cor:Hom:mMDS_dB}}
  \begin{corollary}
    $\displaystyle \dB(X,P_X^{(m)}(X)) \leq \sqrt2 \sqrt[4]{\sum\limits_{m+1}^d \lambda_i^2}$,
  \end{corollary}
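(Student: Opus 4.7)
The plan is to combine the bottleneck stability theorem for Vietoris--Rips diagrams with the fact that the mMDS projection is an orthogonal projection onto the top-$m$ eigenspace of the centered data's Gram matrix, so the Eckart--Young theorem immediately controls the Frobenius error in terms of the discarded eigenvalues.

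First I would invoke bottleneck stability (a version of which is already implicit in the paper's framework, cf.\ the arguments leading to Theorem \ref{thm:Hom:dN_stabil}): writing $f = P_X^{(m)}$ and using the natural bijection $x_i \mapsto f(x_i)$, we have $\dB(X, f(X)) \leq \dis(f) = \max_{i,j}|\|x_i - x_j\| - \|f(x_i) - f(x_j)\||$. This reduces the corollary to bounding the distortion of the mMDS map.

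Next I would exploit the orthogonality of $P_X^{(m)}$ on the centered data. Writing $y_k = (I - P_X^{(m)})(x_k - \bar{x})$, the Pythagorean identity gives
\[
\|x_i - x_j\|^2 - \|f(x_i) - f(x_j)\|^2 \;=\; \|y_i - y_j\|^2 \;\geq\; 0.
\]
Since the left-hand side is nonnegative, the elementary inequality $a - b \leq \sqrt{a^2 - b^2}$ (valid for $a \geq b \geq 0$) yields $|d_{ij} - d'_{ij}| \leq \|y_i - y_j\|$, so bounding $\dis(f)$ reduces to bounding $\max_{i,j}\|y_i - y_j\|$.

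The main step, and the main obstacle, is to convert the pointwise bound on $\|y_i - y_j\|$ into the Frobenius-norm quantity $\sqrt[4]{\sum_{m+1}^d \lambda_i^2}$. For this I would note that $\|y_i\|^2 = (G - G^{(m)})_{ii}$, where $G$ is the centered Gram matrix and $G^{(m)}$ its best rank-$m$ approximation. The triangle inequality gives $\|y_i - y_j\| \leq \|y_i\| + \|y_j\| \leq 2\max_k\|y_k\|$, and then the chain
\[
\bigl(\max_k \|y_k\|^2\bigr)^2 \;\leq\; \sum_k (G - G^{(m)})_{kk}^{\,2} \;\leq\; \|G - G^{(m)}\|_F^{\,2} \;=\; \sum_{i=m+1}^d \lambda_i^{\,2},
\]
where the last equality is Eckart--Young for the symmetric PSD matrix $G$ whose nonzero spectrum coincides with that of the covariance matrix. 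Taking square roots twice produces the $\sqrt[4]{\cdot}$ rate, and the constant $\sqrt{2}$ then comes out of carefully tracking a factor (either by using $\|y_i - y_j\|^2 \leq 2(\|y_i\|^2 + \|y_j\|^2)$ in combination with the bound on $\max_k\|y_k\|^2$, or by bounding $\|G - G^{(m)}\|_{\max}$ directly rather than going through $\max_k (G - G^{(m)})_{kk}$). The delicate point here is matching the constants exactly: one has to choose the tightest available triangle/Cauchy--Schwarz inequality on $\|y_i - y_j\|^2$ so that the factor that finally sits outside the fourth root is $\sqrt{2}$ rather than the crude $2$ that the most naive triangle-inequality bound would give.
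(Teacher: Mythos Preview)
Your overall scheme coincides with the paper's: the corollary there is derived from Theorem~\ref{thm:Hom:mMDS_dist} (the distortion bound) together with the stability inequality $\dB(X,Y)\le 2\dGH(X,Y)\le \dis(f)$, and the proof of Theorem~\ref{thm:Hom:mMDS_dist} uses precisely your two ingredients. Your inequality $a-b\le\sqrt{a^2-b^2}$ is exactly the paper's Lemma~\ref{lem:Hom:Hada_lem} (written there as $\maxnorm{A-B}^2\le\maxnorm{A^{\circ 2}-B^{\circ 2}}$), and both proofs then pass to a Frobenius-norm bound controlled by the discarded eigenvalues. So the architecture is not different; what differs is the route you take in the final eigenvalue step and, consequently, the constant you obtain.

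The gap is the constant. Both of the routes you sketch give $2$, not $\sqrt{2}$. With $M=G-G^{(m)}$ positive semidefinite one has $\|y_i-y_j\|^2 = M_{ii}+M_{jj}-2M_{ij}\le 4\max_k M_{kk}=4\maxnorm{M}\le 4\|M\|_F$, and your alternative $\|y_i-y_j\|^2\le 2(\|y_i\|^2+\|y_j\|^2)\le 4\max_k\|y_k\|^2$ lands at the same place; taking the square root leaves a $2$ in front of the fourth root. The paper instead works with the \emph{squared distance matrices} themselves: writing $-\tfrac{1}{2}D_Z^{\circ 2}=Q\Sigma^2 Q^T$ and $-\tfrac{1}{2}D_{\tilde X}^{\circ 2}=Q\Sigma_{|_m}^2 Q^T$ and appealing to unitary invariance of $\|\cdot\|_F$, it gets
\[
\max_{i,j}\|y_i-y_j\|^2=\maxnorm{D_Z^{\circ 2}-D_{\tilde X}^{\circ 2}}\le 2\,\|\Sigma^2-\Sigma_{|_m}^2\|_F=2\sqrt{\textstyle\sum_{m+1}^d\lambda_i^2},
\]
so that only a single factor of $2$ (from the $-\tfrac12$ in the Gram/squared-distance relation) appears before the square root, yielding $\sqrt{2}$. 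If you want to match the paper's constant, you should abandon the entrywise bound on $\|y_i-y_j\|^2$ through $\max_k M_{kk}$ and instead bound $\maxnorm{D_Z^{\circ 2}-D_{\tilde X}^{\circ 2}}$ as a whole, exploiting the diagonalization from Lemma~\ref{lem:Hom:mMDS_lem}; going through individual $y_i$'s and the triangle inequality will always cost you an extra factor of $2$.
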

}
\noindent and
{
  \renewcommand{\thecorollary}{\ref{cor:Hom:mMDS_dN}}
  \begin{corollary}
    $\D(X,P^{(m)}_X(X)) \leq \frac{2\sqrt{2}}{\diam(P^{(m)}_X(X))}\sqrt[4]{\frac{\left(\sum\limits_{i=1}^m \lambda_i^2\right)\left(\sum\limits_{i=m+1}^d \lambda_i^2\right)}{\left(\sum\limits_{i=1}^d \lambda_i^2\right)}}$\,,
  \end{corollary}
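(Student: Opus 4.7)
The plan is to combine Theorem \ref{thm:Hom:dN_stabil} with a careful analysis of the optimal metric decomposition $\Delta$ between $X$ and its mMDS projection $Y = P_X^{(m)}(X)$. The stability theorem immediately gives $\D(X,Y) \leq \frac{2\norm{\Delta}}{\diam(Y)}$, so up to the factor of $2$ the task reduces to proving $\norm{\Delta} \leq \sqrt{2}\sqrt[4]{\frac{(\sum_{i=1}^m \lambda_i^2)(\sum_{i=m+1}^d \lambda_i^2)}{\sum_{i=1}^d \lambda_i^2}}$. The factor of $\sqrt{2}$ matching the bound in Corollary \ref{cor:Hom:mMDS_dB} suggests that the same Frobenius-norm machinery used there will reappear, but now after optimizing over a global scale.

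First, I would write the squared pairwise distances of $X$ in the spectral frame provided by the (centered) covariance matrix: for each pair $(x_i,x_j)$, one has $d_X(x_i,x_j)^2 = \sum_{k=1}^d \langle v_k, x_i-x_j \rangle^2$, where $v_k$ is the $k$th eigenvector. Since mMDS keeps exactly the top-$m$ coordinates, $d_Y(x_i,x_j)^2 = \sum_{k=1}^m \langle v_k, x_i-x_j\rangle^2$. Summing appropriate products over all pairs and invoking the double-centering / trace identities that link the sum of squared pairwise distances to the trace of the Gram matrix, one obtains $\sum_{i,j} d_X^2 \sim \sum_{k=1}^d \lambda_k$, $\sum_{i,j} d_X^2 d_Y^2 \sim \sum_{k=1}^m \lambda_k^2$ (with suitable constants), and the bound $\sum_{i,j}(d_X^2 - d_Y^2)^2 \lesssim \sum_{k=m+1}^d \lambda_k^2$ that already underlies Corollary \ref{cor:Hom:mMDS_dB}.

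Next, following Definition \ref{defn:Hom:met_scaling}, I would introduce a scaling parameter $\alpha$ and write the decomposition residual pairwise as $\Delta_{ij}(\alpha) = d_Y(x_i,x_j) - \alpha\, d_X(x_i,x_j)$ (or its squared-distance analogue, whichever matches the definition). The resulting $\norm{\Delta(\alpha)}^2$ is quadratic in $\alpha$, so its minimum is given by the usual projection formula
\[
\min_{\alpha} \norm{\Delta(\alpha)}^2 \;=\; \norm{Y}^2 - \frac{\langle X,Y\rangle^2}{\norm{X}^2},
\]
where the inner products are taken over pairwise (squared) distances. Substituting the spectral identities from the previous step collapses the right-hand side to a constant multiple of $\frac{(\sum_{k=1}^m \lambda_k^2)(\sum_{k=m+1}^d \lambda_k^2)}{\sum_{k=1}^d \lambda_k^2}$, which is exactly the expression under the fourth root in the statement. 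Taking a square root gives the required bound on $\norm{\Delta}$, and then Theorem \ref{thm:Hom:dN_stabil} finishes the proof.

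The main obstacle I anticipate is bookkeeping rather than conceptual: translating between the three natural scales (pairwise distances, squared pairwise distances, and the Frobenius norm of the double-centered Gram matrix) without losing constants, and in particular verifying that the same normalization is consistent with the one used in Corollary \ref{cor:Hom:mMDS_dB} so that the factor $\sqrt{2}$ carries over cleanly. Once that ledger is kept straight, the fourth-root shape of the bound is forced by the Pythagorean identity for the optimal $\alpha$ combined with the Cauchy--Schwarz step that controls the maximum pairwise discrepancy by the Frobenius norm.
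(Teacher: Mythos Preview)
Your plan is correct and is essentially the paper's own argument: the paper proves the corollary by combining Theorem~\ref{thm:Hom:dN_stabil} with Theorem~\ref{thm:Hom:mMDS_opt_decomp}, and the latter is obtained exactly as you outline---pass from $\maxnorm{D_Y-sD_X}$ to the squared-distance version via Lemma~\ref{lem:Hom:Hada_lem}, bound the max norm by the Frobenius norm, diagonalize using the SVD of the realization, and minimize the resulting quadratic in $s^2$ (the paper plugs in the explicit minimizer $s^2=\sig{1}{m}/\sig{1}{d}$, while you phrase it as the projection formula; the two are identical). One small caution: the statement ``$\norm{\Delta(\alpha)}^2$ is quadratic in $\alpha$'' holds only for the Frobenius surrogate, not for the actual $\maxnorm{\cdot}$, so be explicit that you are minimizing the Frobenius upper bound and then invoking $\maxnorm{\Delta}\leq\maxnorm{\Delta_s}$ for that particular $s$.
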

}
\noindent where
$P_X^{(m)}:X \rightarrow \R^m$ is the mMDS projection (see Definition \ref{defn:Hom:mMDS}) and $\lambda_i$'s are the eigenvalues of the covariance matrix.
These bounds indicate that if the first $m$ eigenvectors capture most of the variance, and hence the remaining eigenvalues $\{\lambda_i\}_{m+1}^d$ are small, then homology is preserved well in terms of small values of both $\dB$ and $\D$.
And finally, we study a general class of $k$-biLipschitz projection maps $f:X\rightarrow Y$, i.e., which satisfy  $(1/k) \, d_X \, \leq d_Y \leq \, k \, d_X$ for some $k \geq 1$, to obtain bounds on homology preservation in terms of $\D$ as
{
  \renewcommand{\thecorollary}{\ref{cor:Hom:Lip_dN}}
  \begin{corollary}
    $\displaystyle \D{(X,Y)} \leq \left|\frac{k^2-1}{k}\right|\frac{\diam(X)}{\diam(Y)}$\,.
  \end{corollary}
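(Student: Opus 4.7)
The plan is to invoke the stability theorem \ref{thm:Hom:dN_stabil}, which converts the task into producing \emph{any} metric decomposition of $Y$ in terms of $X$ and bounding its residual $\norm{\Delta}$ in terms of $\diam(X)$. Since we only need an upper bound on $\D(X,Y)$, rather than the optimal decomposition, it suffices to exhibit one well-chosen scaling factor $\alpha$ and estimate the resulting residual.

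First I would use the biLipschitz hypothesis $k^{-1} d_X \leq d_Y \leq k\, d_X$ to observe that every pairwise ratio $d_Y(f(x_i),f(x_j))/d_X(x_i,x_j)$ lies in the interval $[k^{-1},k]$. I would then define the decomposition via the midpoint of this interval, setting $\alpha = (k + k^{-1})/2$ and $\Delta_{ij} := d_Y(f(x_i),f(x_j)) - \alpha\, d_X(x_i,x_j)$. This choice of $\alpha$ is the minimax minimizer of the worst-case deviation of the dilated $d_X$ from $d_Y$, since it equalizes the two extreme gaps $k-\alpha$ and $\alpha-k^{-1}$.

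The key computation is a single pointwise estimate:
\[
|\Delta_{ij}| \;=\; d_X(x_i,x_j)\,\bigl|d_Y(f(x_i),f(x_j))/d_X(x_i,x_j) - \alpha\bigr| \;\leq\; d_X(x_i,x_j)\cdot \max\!\bigl(k-\alpha,\ \alpha - k^{-1}\bigr) \;=\; d_X(x_i,x_j)\cdot \frac{k-k^{-1}}{2}.
\]
Since $d_X(x_i,x_j) \leq \diam(X)$, this yields $\norm{\Delta} \leq \frac{k^2-1}{2k}\,\diam(X)$. Plugging into Theorem \ref{thm:Hom:dN_stabil} produces
\[
\D(X,Y) \;\leq\; \frac{2\norm{\Delta}}{\diam(Y)} \;\leq\; \frac{k^2-1}{k}\cdot\frac{\diam(X)}{\diam(Y)},
\]
matching the stated bound (the absolute value in the statement is immaterial once $k \geq 1$).

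The main obstacle I anticipate is essentially bookkeeping around the norm $\norm{\cdot}$ on the residual: one must confirm that whichever norm appears in Definition \ref{defn:Hom:met_scaling} is dominated by the entrywise $\max$-norm of pairwise residuals, so that my pointwise estimate actually controls $\norm{\Delta}$. If the definition further restricts $\alpha$ (for instance to a specific ratio of diameters) rather than letting it range freely over positive reals, the argument still succeeds, since the \emph{optimal} decomposition appearing in Theorem \ref{thm:Hom:dN_stabil} is no worse than our explicit midpoint choice, and the constant $(k^2-1)/k$ is preserved.
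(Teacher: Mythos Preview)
Your proposal is correct and follows essentially the same route as the paper: the paper first proves (as Theorem~\ref{thm:Hom:Lip_opt_decomp}) that $\norm{\Delta} \leq \left|\frac{k^2-1}{2k}\right|\norm{D_X}$ by choosing exactly your midpoint scaling $s = (k + k^{-1})/2$ and bounding the residual pointwise, and then deduces the corollary by plugging this into Theorem~\ref{thm:Hom:dN_stabil}. Your concern about the norm is unfounded, since $\norm{\cdot}$ in the paper is precisely $\maxnorm{\cdot}$, and your observation that the optimal $\Delta$ can only be smaller than the one coming from the midpoint choice is the same ``transitivity of our minimum $\Delta$'' step the paper uses.
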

}

\subsection{Related Work}

The use of bottleneck distance between persistence diagrams as a means of measuring similarity between metric spaces has gained much attention in the recent years.
In 2014, Chazal et al.~\cite{ChdeSOu2014} showed that the bottleneck distance between PDs of PCD is indeed stable with respect to the input metric spaces:
\begin{theorem}[Chazal et al., 2014~\cite{ChdeSOu2014}]
  \label{thm:Hom:BnStb}
  $\dB(X,Y)\leq 2 \dGH(X,Y)$,
\end{theorem}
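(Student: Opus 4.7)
The plan is to reduce $\dB(X,Y) \leq 2\dGH(X,Y)$ to an interleaving of Vietoris--Rips persistence modules, followed by an appeal to the algebraic stability theorem for persistence modules. The starting point is the correspondence characterization of Gromov--Hausdorff distance: $\dGH(X,Y) = \tfrac{1}{2}\inf_R \dis(R)$, where $R \subseteq X \times Y$ ranges over correspondences (relations whose two projections are surjective) and $\dis(R) = \sup\{|d_X(x,x') - d_Y(y,y')| : (x,y),(x',y') \in R\}$ is the distortion. This reformulation converts the abstract isometric-embedding definition of $\dGH$ into data that can be manipulated simplicially.

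First, I would fix an arbitrary correspondence $R$ with $\dis(R) = \eta$ and build maps $\phi : X \to Y$ and $\psi : Y \to X$ by selecting, for each point, a partner under $R$. The key observation is that $\phi$ extends to a simplicial map $\VR{X}{r} \to \VR{Y}{r+\eta}$ at every scale $r$, because if $x_i,x_j$ lie within distance $r$ in $X$, then $d_Y(\phi(x_i),\phi(x_j)) \leq d_X(x_i,x_j) + \eta \leq r + \eta$ by the definition of $\dis(R)$; dually for $\psi$.

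Next, I would verify that $\phi$ and $\psi$ assemble into an $\eta$-interleaving of the persistence modules $H_\ast(\VR{X}{\cdot})$ and $H_\ast(\VR{Y}{\cdot})$. The subtle point is that $\psi \circ \phi$ need not equal the shifted inclusion $\VR{X}{r} \hookrightarrow \VR{X}{r+2\eta}$ on the nose; however, for every $x \in X$, both $x$ and $\psi(\phi(x))$ are paired with $\phi(x)$ under $R$, which forces $d_X(x,\psi(\phi(x))) \leq \eta$. From this bound one checks that $\psi\circ\phi$ is simplicially contiguous to the inclusion at every scale, and contiguous simplicial maps induce the same map on homology; the symmetric statement holds for $\phi\circ\psi$. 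Invoking the algebraic stability theorem of Chazal--Cohen-Steiner--Glisse--Guibas--Oudot then yields $\dB(X,Y) \leq \eta$, and taking the infimum over correspondences gives $\dB(X,Y) \leq 2\,\dGH(X,Y)$.

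The main obstacle is the contiguity step: because the selection functions $\phi,\psi$ are not canonical and their composition is not the identity, one needs a clean argument that they nonetheless induce an honest interleaving at the level of persistent homology. Once the contiguity bookkeeping is carried out, the remainder is a direct unpacking of the correspondence formulation of $\dGH$ and a black-box application of algebraic stability; no further numerical estimates are required.
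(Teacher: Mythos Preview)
The paper does not provide its own proof of this statement: Theorem~\ref{thm:Hom:BnStb} is quoted in the Related Work section as a result of Chazal, de~Silva, and Oudot~\cite{ChdeSOu2014}, and is thereafter used as a black box (e.g., in the proof of Corollary~\ref{cor:Hom:mMDS_dB}). There is consequently nothing in this paper to compare your argument against.

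That said, your sketch is essentially the proof given in the cited reference: pick a correspondence realizing distortion close to $2\dGH(X,Y)$, promote it to simplicial maps between Vietoris--Rips filtrations that shift the scale parameter by the distortion, use contiguity to show the round-trip compositions agree in homology with the inclusion maps, and conclude via the algebraic stability/interleaving theorem. One minor caveat worth tightening: to pass from $\dB(X,Y)\leq \eta$ for every correspondence with $\dis(R)=\eta$ to $\dB(X,Y)\leq 2\dGH(X,Y)$, you should either note that for finite $X,Y$ the infimum over correspondences is attained, or argue with an arbitrary $\varepsilon>0$ slack and let $\varepsilon\to 0$; the way you phrase ``taking the infimum over correspondences'' is slightly informal but the intent is clear.
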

where $\dGH$ is the Gromov-Hausdorff distance (see Definition \ref{defn:Hom:GH}).
Some researchers have recently considered modifications of the standard bottleneck distance between PDs that are invariant to certain transformations of the input metric spaces.
Sheehy et al.~\cite{CaKiSh2018} studied a \emph{shift-invariant bottleneck distance} between two PDs $A, B$ defined as
\[
  \op{d}_{\mathrm{sB}}(A, B)=\min _{s \in \mathbb{R}} \dB \left(A_{s}, B\right) \text{ where } A_s = \{s+a\ |\ a \in A\}.
\]
While shifts of PDs in the above manner will not directly capture scalings of the metric spaces, the authors mentioned that this shift-invariant similarity measure could possibly be applied to log-scale persistence diagrams in order to define a scale-invariant bottleneck distance.
At the same time, the question of scale invariance was not directly considered.
Cao et al.~\cite{CaVlScKaMo2021} were the first to study a dilation (i.e., scale) invariant bottleneck distance, where they defined the dissimilarity measure
\begin{equation} \label{defn:Hom:DilBn}
  \overline{\mathrm{d}_{\rm D}}\left(X, Y\right)= \inf_{c \in \mathbb{R}_{+}}
  \dB \left(cX, Y\right) \,.
\end{equation}
They show that this similarity measure is in fact dilation invariant and stable with respect to the Gromov-Hausdorff distance.
They further provide computational results for efficient searching for an optimal scaling factor $c$, as well as provide some computational evidence for the efficacy of this similarity measure.
On the other hand, we prove that an optimal decomposition of $Y$ in terms of $X$ always exists in our case (see Theorem \ref{thm:Hom:opt_decomp_exist}).

The question of preservation of topology of PCD under Johnson-Lindenstrauss (JL) random projections was addressed by Sheehy \cite{Sh2014} and Lotz \cite{Lo2019}.
They both showed that \v{C}ech persistent homology of a PCD is approximately preserved under JL random projections up to a $(1 \pm \epsilon)$ multiplicative factor for any $\epsilon \in [0,1]$ \cite{Sh2014,Lo2019}.
Sheehy proved this result for a PCD with $n$ points, whereas the result of Lotz considered more general sets of bounded Gaussian width, and also implied dimensionality reductions for sets of bounded doubling dimension in terms of the spread (ratio of the maximum to minimum interpoint distance).
Our bound of $\epsilon$ on $\D$ in Corollary \ref{cor:Hom:JL_dN} agrees with these results, while our proof approaches are distinct since we obtain a new bound on the distortion (see Definition \ref{defn:Hom:distor}) of the JL map in order to derive our bound.
Recently, the results of Sheehy and Lotz were extended by Arya et al.~\cite{ArBoDuLo2021} to the case of more general $k$-distances.
Similar bounds on homology preservation under other nonlinear DR techniques such as multidimensional scaling have been scarce. 

A motivation for our study of homology preservation under $k$-biLipschitz maps was the work of Mahabadi et al.~\cite{MaMaMaRa2018} who introduced the notion of outer biLipschitz extensions of maps between Euclidean spaces as a means for nonlinear dimension reduction.
But homology preservation was not a focus of their work.

The related problem of combining TDA with DR to design new topology-preserving DR techniques has been studied as well.
Motivated by the nonlinear DR technique of Isomap \cite{TedeSLa2000}, Yan et al.~\cite{YaZhRoScWa2018} proposed a method for homology-preserving DR via manifold landmarking and tearing.
Wagner et al.~\cite{WaSoBe2021} have used a topological loss term in a gradient-descent based method for DR that outperformed standard DR techniques in topology preservation.
But our focus is on quantifying topology preservation of existing DR techniques.

\subsection{Notation and Definitions}

We focus on comparing finite metric spaces denoted as $(X,d_X)$ and $(Y,d_y)$ unless otherwise stated, and neither $X$ nor $Y$ are trivial ($d_X$ and $d_Y$ are not identically $0$).
The size of a metric space is measured by its diameter, $\op{diam}(X) = \max\limits_{a,b \in X}d_X(a,b)$.

A common metric for comparing metric spaces is the Gromov-Hausdorff distance ($\dGH$), which can be thought of as a way of measuring to what degree two spaces are isometric.
We first define \emph{distortion}, and  define $\dGH$ in terms of distortion.



\begin{definition} \label{defn:Hom:distor}
The distortion of a relation $\sigma \in \mathcal{R}(X,Y)$ between metric spaces $X$ and $Y$ is
  \[
    \operatorname{dis} \sigma=\sup \left\{|d( x,x^{\prime})-d(y,y^{\prime})|:(x, y),\left(x^{\prime}, y^{\prime}\right) \in \sigma\right\}\,.
  \]
  The distortion of a mapping $f: X \rightarrow Y$, dis($f$), is the distortion of the graph of $f$.
\end{definition}


\begin{definition}
\label{defn:Hom:GH}
The Gromov-Hausdorff distance between metric spaces $X$ and $Y$ is
\[
  \dGH(X, Y)=\frac{1}{2} \inf \{\operatorname{dis} R: R \in \mathcal{R}(X, Y)\}\,.
\]
\end{definition}

We denote by $\dgm(X)$ the Vietoris-Rips persistence diagram of metric space $X$.
For a finite metric space $X$, this diagram consists of a finite multiset of points above the diagonal, i.e., $y=x$ line, in the extended plane $\bar{\R}^2$.
To this finite multiset, we add the infinitely many points on the diagonal, each with infinite multiplicity.
We now list the standard definition of the bottleneck distance between two such PDs \cite[\S VIII]{EdHa2009}.
\begin{definition}
\label{defn:dB}
The bottleneck distance between the persistence diagrams of two finite metric spaces $X$ and $Y$ is
\[
\dB(\dgm(X), \dgm(Y)) = \min_{\gamma: \dgm{X} \to \dgm{Y}} \max_{x \in \dgm{X}} \norm{x - \gamma(x)}_{\infty}~\text{for a bijection } \gamma.
\]
We shorten the notation to $\dB(X,Y)$ when there is no ambiguity.
\end{definition}

Finally, we collect in Table \ref{tab:not} some of the notation used throughout this paper.

\begin{table}[ht!]
\caption{Notation used throughout the paper, and their description.} \label{tab:not}
\begin{tabular}{|l|l|}
  \hline
  $\textnormal{VR}(X,r)$ & The Vietoris-Rips complex of metric space $X$ at {\bfseries diameter} $r$\\
  \hline
  $\dgm(X)$ & The Vietoris-Rips persistence diagram of metric space $X$\\
  \hline
  $\dB(X,Y)$ & Bottleneck distance between $\dgm(X)$ and $\dgm(Y)$\\
  \hline
  $\diam_X(A)$ & The diameter of a set $A\subset X$ with respect to a metric $d_X$, $\max\limits_{a,b \in A}\{d_X(a,b)\}$\\
  \hline
  $s\cdot A$ & $\{s\cdot a\ |\ a \in A\}$ when $A \subset \R^n$\\
  \hline
  $C(Z)$ & The covariance matrix $ZZ^T$ of $Z$, when $Z = [z_1 \dots\ z_n]$ is a data matrix\\
  \hline
  $A \circ B$ & The Hadamard product of $A$ and $B$\\
  \hline
  $\displaystyle A^{\circ^n}$ & The $n^{\text{th}}$ Hadamard power of a matrix $A$\\
  \hline
  $G_X$ & The gram matrix of a data set $X \subset \R^n$\\
  \hline
  $I_{|_m}$ & The matrix: $\op{diag}(\underbrace{1,1,\ldots,1}_{m},\underbrace{0,0,\ldots,0}_{n-m})$ for $m\leq n$\\
  \hline
  $\Sigma_{|_m}$ & The matrix $\Sigma\cdot I_{|_m}$\\
  \hline
\end{tabular}  
\end{table}

\section{Methods and Results}

Given a metric space $(X,d_X)$ we consider the family of metric spaces $sX := (X,sd_X)$ for  $s>0$.
In words, we consider a family of metrics on $X$ that are positive scalars of $d_X$.
Intuitively, the family $sX$ can be thought of as all dilations and contractions of $X$.

\begin{proposition}
\label{prop:Hom:MSprpts}
The following properties hold for a metric space $(X,d_X)$ and $p,q > 0$.
 \begin{enumerate}
   \item $\diam(pX) = p\diam(X)$.
   \item $p(qX) = (pq)X$.
 \end{enumerate}
\end{proposition}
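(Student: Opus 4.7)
The plan is to unwind the definition of the scaled metric space $sX = (X, s d_X)$ and apply elementary properties of the supremum and of real multiplication. Both claims should follow essentially from definition-chasing, with no deeper topological or metric machinery required; the proposition is effectively a sanity check establishing that the family $\{sX : s > 0\}$ behaves as one would hope under scaling and composition.

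For part (1), I would start from the definition $\diam(pX) = \max_{a,b \in X}(pd_X)(a,b)$, where $pd_X$ is the scaled metric assigning $(pd_X)(a,b) = p \cdot d_X(a,b)$. Since $p > 0$, scalar multiplication by $p$ is a monotone (indeed order-preserving) operation on the nonnegative reals, so it commutes with the maximum: $\max_{a,b \in X} p \cdot d_X(a,b) = p \cdot \max_{a,b \in X} d_X(a,b) = p \diam(X)$. This yields the identity. The only subtle point worth flagging is that $p > 0$ is required; for $p = 0$ the scaled object is no longer a metric space in the usual nondegenerate sense, but the hypothesis $p, q > 0$ in the statement excludes this.

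For part (2), I would again proceed purely by rewriting. By definition, $qX$ is the metric space $(X, q d_X)$, and then $p(qX)$ is the metric space $(X, p \cdot (q d_X))$. The underlying set is unchanged throughout, so the content of the claim reduces to the pointwise identity $p \cdot (q d_X)(a,b) = p \cdot q \cdot d_X(a,b) = (pq) d_X(a,b)$, which is associativity of real multiplication applied at each pair $(a,b) \in X \times X$. Hence $p(qX) = (X, (pq)d_X) = (pq)X$.

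The main obstacle, such as it is, is purely notational: making sure the distinction between the metric $d_X$, its scalar multiple $s d_X$ as a function on $X \times X$, and the resulting metric space $sX$ is kept clean. No estimation, no limits, and no topological facts are needed. I would therefore expect the written-out proof to be only a few lines for each part.
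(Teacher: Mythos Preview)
Your proposal is correct; both parts are indeed immediate from the definitions, exactly as you describe. The paper itself states this proposition without proof, so there is nothing to compare against beyond noting that your definition-chasing argument is the natural and complete one.
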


Scaling metric spaces has natural consequences on VR complexes, persistence diagrams, and bottleneck distances.

\begin{lemma}
\label{lem:Hom:VR_scaling}
Let $s>0$.
Then $\op{VR}(X,r) = \op{VR}(sX,sr)$ for all $r>0$.
\end{lemma}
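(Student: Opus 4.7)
The plan is to prove the equality of the two Vietoris-Rips complexes by showing that they have the same underlying vertex set and that a simplex belongs to one if and only if it belongs to the other. Since this is essentially a direct unpacking of the definitions, the argument should be short and rest on the positivity of $s$.

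First, I would recall that both $X$ and $sX$ share the same underlying set by definition, so the $0$-skeleton of $\op{VR}(X,r)$ and $\op{VR}(sX,sr)$ agree. Next, I would use the standard definition of the Vietoris-Rips complex at diameter $r$: a simplex $\sigma = [x_0, x_1, \ldots, x_k]$ belongs to $\op{VR}(X,r)$ exactly when $d_X(x_i, x_j) \leq r$ for all $i,j$. In $sX$, the metric is $sd_X$, so $\sigma \in \op{VR}(sX,sr)$ exactly when $sd_X(x_i, x_j) \leq sr$ for all $i,j$. Dividing both sides by $s>0$ preserves the inequality, giving the equivalent condition $d_X(x_i, x_j) \leq r$. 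Hence the same simplices appear in both complexes.

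I would then conclude by observing that since both complexes have the same vertex set and the same collection of simplices at every dimension, they are equal as abstract simplicial complexes. There is no real obstacle here; the only point to be slightly careful about is the strict positivity of $s$, which is needed to divide through by $s$ while preserving the direction of the inequality (and is given in the hypothesis). The result is essentially a compatibility statement between the two scalings (of the metric and of the filtration parameter) and will be the foundational observation used in subsequent arguments about how scaling interacts with persistence diagrams.
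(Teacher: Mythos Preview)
Your proposal is correct and follows essentially the same approach as the paper: both arguments unpack the definition of the Vietoris-Rips complex and use the fact that multiplying an inequality by $s>0$ preserves it, so that membership of a simplex in $\op{VR}(X,r)$ is equivalent to membership in $\op{VR}(sX,sr)$. The only cosmetic difference is that the paper phrases the condition via $\op{diam}_X(\sigma)<r$ (strict inequality on the diameter) rather than pairwise $d_X(x_i,x_j)\leq r$, but this does not affect the argument.
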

\begin{proof}
Let $s,r >0$.
By definition, $\op{VR}(X,r) = \{\sigma \subset X\ |\ \op{diam}_X(\sigma) < r\}$, so let $\sigma \in \VR{X}{r}$.
Then $\op{diam}_X(\sigma)<r \iff s\op{diam}_X(\sigma)<sr \iff \op{diam}_X(s\sigma)<sr \iff  \op{diam}_{sX}(\sigma)<sr \iff \sigma \in \VR{sX}{sr}$.
\end{proof}
\begin{corollary}
\label{cor:Hom:dgm_scaling}
$s\cdot \dgm(X) = \dgm(sX)$ for all $s>0$.
\end{corollary}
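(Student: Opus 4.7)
The plan is to derive the corollary directly from Lemma \ref{lem:Hom:VR_scaling} by tracking how the identification of Vietoris-Rips complexes reparametrizes the filtration, and then reading off the effect on births and deaths.

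First I would recall that $\dgm(X)$ is computed from the filtration $\{\op{VR}(X,r)\}_{r > 0}$, with each point $(b,d)$ in the diagram encoding the parameter values at which some homology class appears and disappears. Applying Lemma \ref{lem:Hom:VR_scaling}, for every $r > 0$ we have $\op{VR}(sX, r) = \op{VR}(X, r/s)$ (just set $r \mapsto r/s$ in the lemma). Therefore the filtration of $sX$ parametrized by $r' \in (0,\infty)$ is identical, as a filtered simplicial complex, to the filtration of $X$ parametrized by $r = r'/s$. In other words, the two filtrations differ only by the monotone reparametrization $r' = sr$ of $(0,\infty)$.

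Next I would invoke the fact that persistent homology is functorial with respect to such reparametrizations: if a homology class is born at parameter $r = b$ and dies at $r = d$ in the filtration of $X$, then the same class (via the canonical identification of the complexes) is born at $r' = sb$ and dies at $r' = sd$ in the filtration of $sX$. Consequently each point $(b,d) \in \dgm(X)$ contributes the point $(sb, sd)$ to $\dgm(sX)$, and this correspondence is a bijection on off-diagonal points. Diagonal points, having infinite multiplicity, are preserved under scaling as a set. Together this gives $\dgm(sX) = \{(sb, sd) : (b,d) \in \dgm(X)\} = s \cdot \dgm(X)$.

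The argument is essentially a bookkeeping step, so I do not expect any real obstacle; the only thing to be careful about is phrasing the reparametrization precisely so that the induced map on persistent homology is literally an isomorphism of persistence modules (rather than just a pointwise equality of homology groups). That comes down to noting that the simplicial inclusions $\op{VR}(X, r_1) \hookrightarrow \op{VR}(X, r_2)$ for $r_1 \leq r_2$ are identical to the inclusions $\op{VR}(sX, sr_1) \hookrightarrow \op{VR}(sX, sr_2)$, which is immediate from Lemma \ref{lem:Hom:VR_scaling} applied at both parameters.
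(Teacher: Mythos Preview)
Your proposal is correct and follows the same approach as the paper: both derive the result directly from Lemma \ref{lem:Hom:VR_scaling}. The paper's proof is simply the one-line ``Follows immediately from Lemma \ref{lem:Hom:VR_scaling},'' and your write-up just spells out the reparametrization and functoriality details that this line leaves implicit.
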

\begin{proof}
Follows immediately from Lemma \ref{lem:Hom:VR_scaling}.
\end{proof}

\begin{theorem}
  \label{thm:Hom:dB_scaling}
  $s\cdot \dB(X,Y) = \dB(sX,sY)$ for all $s>0$.
\end{theorem}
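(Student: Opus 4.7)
The plan is to derive the identity directly from Corollary \ref{cor:Hom:dgm_scaling} together with the definition of the bottleneck distance as a minimum over bijections of persistence diagrams (Definition \ref{defn:dB}). Concretely, I will set up a natural correspondence between bijections $\gamma : \dgm(X) \to \dgm(Y)$ and bijections $\tilde{\gamma} : \dgm(sX) \to \dgm(sY)$, and show that this correspondence multiplies the associated $L^\infty$ cost by exactly $s$.

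First I would invoke Corollary \ref{cor:Hom:dgm_scaling} to write $\dgm(sX) = s\cdot \dgm(X)$ and $\dgm(sY) = s\cdot \dgm(Y)$ as multisets in $\bar{\R}^2$. Since $s>0$, the map $p \mapsto sp$ is a bijection on $\bar{\R}^2$ that fixes the diagonal $y=x$ setwise, so it sends finite off-diagonal points to finite off-diagonal points and preserves the convention of adding the diagonal with infinite multiplicity on each side. In particular, $p \mapsto sp$ is a bijection from $\dgm(X)$ onto $\dgm(sX)$, and likewise from $\dgm(Y)$ onto $\dgm(sY)$.

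Next, given any bijection $\gamma : \dgm(X) \to \dgm(Y)$, define $\tilde{\gamma} : \dgm(sX) \to \dgm(sY)$ by $\tilde{\gamma}(sx) = s\,\gamma(x)$. By the previous paragraph this is well-defined and bijective, and the assignment $\gamma \mapsto \tilde{\gamma}$ is itself a bijection between the sets of admissible bijections used in Definition \ref{defn:dB} for $(X,Y)$ and for $(sX,sY)$ respectively. The key computation is
\[
  \norm{sx - \tilde{\gamma}(sx)}_\infty = \norm{sx - s\,\gamma(x)}_\infty = s\,\norm{x-\gamma(x)}_\infty,
\]
since $s>0$ pulls out of the $L^\infty$ norm. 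Taking the maximum over $x \in \dgm(X)$ (equivalently over $sx \in \dgm(sX)$) gives $\max_{sx} \norm{sx - \tilde{\gamma}(sx)}_\infty = s\,\max_{x}\norm{x-\gamma(x)}_\infty$, and then taking the minimum over bijections on both sides yields $\dB(sX,sY) = s\,\dB(X,Y)$.

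I do not expect a genuine obstacle here; the only point requiring care is making sure the bijection $p \mapsto sp$ interacts correctly with the infinite-multiplicity diagonal component of each extended diagram, which is immediate since scalar multiplication by $s>0$ fixes the diagonal. Everything else is a direct manipulation of the definition using the scaling behavior of $\dgm$ already established in Corollary \ref{cor:Hom:dgm_scaling}.
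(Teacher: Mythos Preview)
Your proposal is correct and follows essentially the same approach as the paper: both pull the scalar $s$ through the $L^\infty$ norm in the definition of $\dB$ and use Corollary~\ref{cor:Hom:dgm_scaling} to identify $\dgm(sX)$ and $\dgm(sY)$ with $s\cdot\dgm(X)$ and $s\cdot\dgm(Y)$, thereby matching up the bijections. Your version is slightly more explicit about the correspondence $\gamma \mapsto \tilde\gamma$ and the handling of the diagonal, but the argument is the same.
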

\begin{proof}
  \[
     \arraycolsep=1.4pt
     \def\arraystretch{2}
     \begin{aligned}
       s\cdot \dB(X,Y) & =  s \left[\min\limits_{\gamma:\dgm(X) \rightarrow \dgm(Y)} \, \max\limits_{x \in \dgm(X)} \, \norm{x-\gamma(x)}\right] \\
       & = \min\limits_{\gamma:\dgm(X) \rightarrow \dgm(Y)} \, \max\limits_{x \in \dgm(X)} \, s\norm{x-\gamma(x)} \\
       & = \min\limits_{\gamma:\dgm(X) \rightarrow \dgm(Y)} \, \max\limits_{x \in \dgm(X)} \, \norm{sx-s\gamma(x)} .
     \end{aligned}
  \]
Applying Corollary \ref{cor:Hom:dgm_scaling}, we have that $sx \in \dgm(sX)$ and $s\gamma(x) \in \dgm(sY)$, and hence the 

\smallskip
\noindent above expression is equivalent to $\displaystyle \min\limits_{\gamma:\dgm(sX) \rightarrow \dgm(sY)} \, \max\limits_{x \in \dgm(sX)}\norm{x-\gamma(x)} = \dB(sX,sY) $ by definition.
\end{proof}

We define the normalized bottleneck distance using the notion of metric scaling.

\begin{definition}[Normalized Bottleneck]
\label{defn:Hom:dN}
$\displaystyle \D(X,Y) = \dB\left(\frac{X}{\diam(X)},\frac{Y}{\diam(Y)}\right)$.
\end{definition}

It is easy to see that $\D$ is indeed a pseudometric on persistence diagrams.
It inherits most properties of $\dB$, except it is possible for $\D$ to be zero for two distinct PDs.
We can see that $\D$ fails the identity of indiscernibles by taking a metric space $X$ and scaled version $sX$ for $s > 0$.
Then $\D(sX,X) = \D(X,X) = 0$, despite $sX$ and $X$ being distinct for $s \neq 1$.

\begin{theorem}[Scale-invariance]
\label{thm:Hom:dN_scaleIn}
  $\displaystyle \D(pX,qY) =\D(X,Y)$ for all $p,q >0$.
\end{theorem}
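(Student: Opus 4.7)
The plan is to unfold the definition of $\D$ on the left-hand side and then exploit the two algebraic properties of metric scaling recorded in Proposition \ref{prop:Hom:MSprpts} to show that the diameter normalization cancels the scalars $p$ and $q$. This should be a fairly direct definition-chase rather than a substantive argument.

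First, I would apply Definition \ref{defn:Hom:dN} to rewrite
\[
  \D(pX, qY) \;=\; \dB\!\left(\frac{pX}{\diam(pX)},\, \frac{qY}{\diam(qY)}\right).
\]
Then, using part 1 of Proposition \ref{prop:Hom:MSprpts}, I would replace $\diam(pX)$ with $p\,\diam(X)$ and $\diam(qY)$ with $q\,\diam(Y)$. Next, part 2 of Proposition \ref{prop:Hom:MSprpts} lets me simplify the composition of scalings: concretely,
\[
  \frac{1}{p\,\diam(X)}\,(pX) \;=\; \left(\frac{1}{p\,\diam(X)}\cdot p\right)\!X \;=\; \frac{X}{\diam(X)},
\]
and analogously for the $Y$ argument. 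Substituting both simplifications into the expression for $\dB$ reduces the right-hand side to $\dB\bigl(X/\diam(X),\, Y/\diam(Y)\bigr)$, which is exactly $\D(X,Y)$ by definition.

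I do not anticipate any obstacle here; the whole proof rests on the elementary identity $p(qX) = (pq)X$ and the fact that diameter commutes with positive scaling. The only mild point to be careful about is that $p, q > 0$ is required so that $pX$ and $qY$ are genuine (non-degenerate) metric spaces with nonzero diameter, so that the normalization $X/\diam(X)$ and $Y/\diam(Y)$ remains well-defined. Since the statement assumes $p, q > 0$ and $X, Y$ are nontrivial (as fixed at the start of the notation section), this is immediate.
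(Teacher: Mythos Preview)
Your proposal is correct and matches the paper's own proof essentially line for line: unfold the definition of $\D$, use $\diam(pX)=p\,\diam(X)$, cancel $p/p$ and $q/q$, and recognize the result as $\D(X,Y)$. The only cosmetic difference is that you explicitly invoke part~2 of Proposition~\ref{prop:Hom:MSprpts} to justify the scalar cancellation, whereas the paper leaves that step implicit.
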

\begin{proof}
Let $p,q>0$.
Then
\[
  \arraycolsep=1.4pt
  \def\arraystretch{2}
  \begin{aligned}
    \D(pX,qY) & = 
    \dB\left(\frac{pX}{\diam(pX)},\frac{qY}{\diam(qY)}\right)\\
    & = 
    \dB\left(\frac{p}{p}\frac{X}{\diam(X)},\frac{q}{q}\frac{Y}{\diam(Y)}\right) \\
    & = 
    \dB\left(\frac{X}{\diam(X)},\frac{Y}{\diam(Y)}\right) \\
    & =  \D(X,Y).
  \end{aligned}
\]
\end{proof}

\subsection{Metric Decomposition}

When $X$ and $Y$ are of equal size and we have a bijection $f : X \rightarrow Y$, we can examine the metrics of both $X$ and $Y$ via distance matrices $D_X$ and $D_Y$, where $[D_X]_{ij} := d_X(x_i,x_j)$ and $[D_Y]_{ij} := d_Y(f(x_i),f(x_j))$.
This is a typical setting encountered in dimension reduction where $Y$ is a lower dimensional projection of $X$.
Note that we must have such a map $f$ to compare $D_X$ and $D_Y$, since distance matrices are only unique up to orderings of the points.
We have that $\diam(X) = \maxnorm{D_X}$.
We omit the subscript and write $\norm{\cdot}$ to mean $\maxnorm{\cdot}$ when it is clear.

\begin{definition}
  \label{defn:Hom:met_scaling}
  Let $f: X \rightarrow Y$ be a bijection, $s\geq0$.
  Define $\Delta_{s}=\left[\delta_{i j}\right]_{s}=D_{Y}-s D _X$.
  We say that $s D_{X}+\Delta_{s}=D_{Y}$ is a metric decomposition of $Y$ in terms of $X$.
\end{definition}

There are infinitely many choices for $s$ and hence infinitely many decompositions, so naturally we seek an optimal decomposition.
Thinking about $\Delta_s$ as the ``error'' between $D_X$ and $D_Y$, we seek to minimize this error.
In other words, we seek an $s$ such that $\maxnorm{\Delta_s}$ is minimized.
It is not immediately clear that such a scalar $s$ exists, but indeed this is a well-posed problem.

\begin{lemma}
  \label{lem:Hom:error_norm_propts}
  Let $X,Y$ be metric spaces, $f:X\rightarrow Y$ bijective, and let $h:\R_+ \rightarrow \R$ be the map $s \mapsto \maxnorm{\Delta_s}$.
  The following properties hold:
  \begin{enumerate}
    \item $h$ is continuous; \label{lem:Hom:error_norm_propts_hcntns}
    \item there exists an $m >0$ such that $h$ is strictly increasing on $[m,\infty)$. \label{lem:Hom:error_norm_propts_hincr}
  \end{enumerate}
\end{lemma}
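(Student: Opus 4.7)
The plan is to exploit the fact that $h(s) = \|D_Y - sD_X\|_{\max}$ is the maximum of finitely many real-valued functions of $s$, each of a very simple form. Writing $a_{ij} = d_Y(f(x_i),f(x_j))$ and $b_{ij} = d_X(x_i,x_j)$, we have
\[
  h(s) \;=\; \max_{i,j}\, |a_{ij} - s\, b_{ij}|,
\]
so $h$ is a pointwise maximum over finitely many pairs $(i,j)$ of absolute values of affine functions of $s$.

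For part (\ref{lem:Hom:error_norm_propts_hcntns}), I would just note that each function $s \mapsto |a_{ij} - s b_{ij}|$ is continuous, and the maximum of finitely many continuous functions is continuous.

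For part (\ref{lem:Hom:error_norm_propts_hincr}), the idea is that for large $s$ the terms $sb_{ij}$ dominate the constants $a_{ij}$, so the absolute values open up with a definite sign and $h$ becomes the maximum of affine functions with slopes $b_{ij} \ge 0$. Specifically, first choose $s_0$ large enough that $s_0 b_{ij} \ge a_{ij}$ for every pair (using $b_{ij} > 0$ whenever $i \neq j$, as $X$ is nontrivial, and noting the trivial case $b_{ii}=a_{ii}=0$). Then for $s \ge s_0$,
\[
  h(s) \;=\; \max_{i,j}\bigl(s\,b_{ij} - a_{ij}\bigr).
\]
Let $B = \diam(X) = \max_{i,j} b_{ij} > 0$, let $S = \{(i,j) : b_{ij} = B\}$, and set $A = \min\{a_{ij} : (i,j) \in S\}$. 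Among the affine functions $s \mapsto s b_{ij} - a_{ij}$, the one with $(i,j) \in S$ achieving $a_{ij} = A$ has the largest slope $B$ and lies above every other such function for all sufficiently large $s$: for $(i,j)$ with $b_{ij} < B$, the comparison $sB - A > s b_{ij} - a_{ij}$ holds whenever $s > (A - a_{ij})/(B - b_{ij})$. Taking $m$ to be the maximum of $s_0$ and these finitely many thresholds gives $h(s) = sB - A$ on $[m,\infty)$, which is strictly increasing since $B > 0$.

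The only mildly subtle point is making sure one really gets \emph{strict} monotonicity rather than just nondecreasing behavior, since a max of nondecreasing affine functions is in general only nondecreasing. This is resolved by observing that on $[m,\infty)$ the single dominant affine branch has positive slope $B = \diam(X)$. I do not anticipate a real obstacle beyond bookkeeping the finitely many thresholds needed to choose $m$.
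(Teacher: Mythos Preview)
Your proof is correct. For part~(1) you and the paper take different but equally short routes: the paper runs an $\epsilon$--$\delta$ argument via the reverse triangle inequality for $\maxnorm{\cdot}$, obtaining $|h(s)-h(a)| \le |s-a|\,\maxnorm{D_X}$ (so $h$ is in fact Lipschitz), while you simply observe that $h$ is a finite maximum of continuous functions. For part~(2) both arguments begin by choosing a threshold past which every entry satisfies $s\,b_{ij} \ge a_{ij}$, so that $h(s)=\max_{i,j}(s\,b_{ij}-a_{ij})$; the paper then fixes the pair $(u,v)$ achieving the maximum at that threshold and asserts it continues to be the maximizer for all larger $s$, whereas you explicitly isolate the eventually dominant affine branch by selecting the maximal slope $B=\diam(X)$ and, among those with slope $B$, the minimal constant $A$. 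Your bookkeeping with the finitely many crossover points makes precise why a single branch of strictly positive slope governs $h$ on $[m,\infty)$, which is exactly what delivers strict monotonicity; the paper leaves this step to the word ``clearly.''
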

\begin{proof}
  \begin{enumerate}
    \item This follows from the continuity of $\maxnorm{\cdot}$.
      Let $\epsilon >0$, and set $\displaystyle \delta = \frac{\epsilon}{\maxnorm{D_X}}$.
      Then, if $|s-a|< \delta$, we have
      \[
         \begin{aligned}
           |h(s)-h(a)| & = \left|\,\maxnorm{\Delta_s}-\maxnorm{\Delta_a} \,\right| \\
                       & = \left|\,\norm{D_Y - sD_X}-\norm{D_Y-aD_X} \, \right|.
         \end{aligned}
      \]
      Applying the reverse triangle inequality, we obtain
      \[
         \begin{aligned}
           |h(s)-h(a)| & \leq \norm{(s-a)D_X}  \\
           & = |s-a|\norm{D_X} \\
           & < \delta\norm{D_X} \\
           & = \frac{\epsilon}{\maxnorm{D_X}}  \maxnorm{D_X} ~ = \epsilon \,.
         \end{aligned}
      \]
    \item Let $m >0$ be such that $m d_X(a,b) - d_Y(f(a),f(b)) >0$ for all $a,b \in X$, and let $u,v\in X$ achieve the maximum $m d_X(u,v) - d_Y(f(u),f(v)) = \max\limits_{a,b} \{m d_X(a,b) - d_Y(f(a),f(b))\} = \maxnorm{\Delta_m}$.
      If $p > q \geq m$, then clearly $u,v$ also achieve the maximums for $\maxnorm{\Delta_p}$ and $\maxnorm{\Delta_q}$, so necessarily
      \[
         \begin{aligned}
           h(p)-h(q) & = \maxnorm{\Delta_p}-\maxnorm{\Delta_q} \\
           & = p\cdot d_X(u,v) - d_Y(f(u),f(v)) -(q\cdot d_X(u,v) - d_Y(f(u),f(v))) \\
           & = (p-q)d_X(u,v) >0 ~\text{ since } p-q>0 \,.
         \end{aligned}
      \]
      Thus, $h(p)-h(q) >0$.
  \end{enumerate}
\end{proof}

\begin{theorem}
  \label{thm:Hom:opt_decomp_exist}
  Let $\displaystyle D_Y = sD_X + \Delta_s$ for any $s>0$.
  Then $\min\limits_s \maxnorm{\Delta_s}$ exists.
\end{theorem}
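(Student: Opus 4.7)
The plan is to leverage Lemma \ref{lem:Hom:error_norm_propts} to reduce the minimization over the noncompact domain $\R_+$ to a minimization over a compact interval, and then invoke the extreme value theorem. Write $h(s) := \maxnorm{\Delta_s}$ as in the lemma. Part \ref{lem:Hom:error_norm_propts_hincr} of the lemma supplies a constant $m > 0$ such that $h$ is strictly increasing on $[m, \infty)$. In particular $\inf_{s \geq m} h(s) = h(m)$ and this infimum is attained at $s = m$. This lets us confine the search for a global minimizer to the interval $(0, m]$.

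Next I would invoke part \ref{lem:Hom:error_norm_propts_hcntns} of the lemma, which gives continuity of $h$ on $\R_+$. The defining formula $h(s) = \norm{D_Y - s D_X}_{\max}$ in fact extends continuously to the closed interval $[0, m]$, with $h(0) = \maxnorm{D_Y} = \diam(Y)$ finite. By the extreme value theorem, $h$ attains a minimum on the compact set $[0, m]$ at some point $s^{\ast} \in [0, m]$.

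It remains to compare $s^{\ast}$ against the tail. Since $m \in [0, m]$ we have $h(s^{\ast}) \leq h(m)$, and by the monotonicity from part \ref{lem:Hom:error_norm_propts_hincr} we have $h(s) \geq h(m) \geq h(s^{\ast})$ for every $s \geq m$. Combined with $h(s^{\ast}) \leq h(s)$ for $s \in [0, m]$ by construction, this gives $h(s^{\ast}) \leq h(s)$ for all $s \geq 0$. If it turns out that $s^{\ast} = 0$ and we insist on the open domain $s > 0$, either replace $s^{\ast}$ by any small positive $s$ where $h$ still attains $h(0)$ — impossible unless $h$ is constant near $0$, in which case any such $s$ works — or note that the infimum is still realized on $(0, m]$ because continuity of $h$ at $0$ combined with strict behavior around optimizers (as in part \ref{lem:Hom:error_norm_propts_hincr}) precludes $s^{\ast} = 0$ unless $h$ is constant there, giving a positive minimizer in any case.

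The main obstacle is conceptual rather than computational: ruling out the possibility that the infimum is only approached as $s \to \infty$. Part \ref{lem:Hom:error_norm_propts_hincr} of Lemma \ref{lem:Hom:error_norm_propts} is precisely the tool that eliminates this escape to infinity, after which the result is a routine compactness-plus-continuity argument.
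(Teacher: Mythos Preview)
Your approach is essentially the same as the paper's: invoke Lemma~\ref{lem:Hom:error_norm_propts} to get continuity and eventual strict increase past some $m$, then apply the extreme value theorem on the compact interval $[0,m]$ and compare with $h(m)$ on the tail. The paper simply takes the domain to be $[0,\infty)$ (consistent with $s\geq 0$ in Definition~\ref{defn:Hom:met_scaling}), so your final paragraph worrying about $s^\ast = 0$ is unnecessary---and as written that paragraph's reasoning is not quite sound, so you should drop it.
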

\begin{proof}
Let $h:\R_+ \rightarrow \R$ be the mapping from Lemma \ref{lem:Hom:error_norm_propts}.
Using Property \ref{lem:Hom:error_norm_propts_hincr} in Lemma \ref{lem:Hom:error_norm_propts}, let $m$ be a marker such that $h$ is strictly increasing on $[m,\infty)$.
Then clearly $\displaystyle \restr{h}{[m,\infty)}$ has a global minimum at $m$.
Likewise, $\displaystyle \restr{h}{[0,m]}$ has a global minimum since $h$ is continuous and $[0,m]$ is compact.
Now, since $\R_+ = [0,m] \cup [m,\infty)$ and $h$ achieves a minimum of both intervals, $h$ has global minimum $\displaystyle \min\{\min \restr{h}{[m,\infty)},\ \min \restr{h}{[0,m]}\}$.
\end{proof}

Hence we can always minimize $\maxnorm{\Delta_s}$.
In general, we let $s^* = \argmin\limits_s \maxnorm{\Delta_s}$ be an optimal decomposition of $Y$ in terms of $X$, and write $\Delta_{s^*} = \Delta$.

\subsection{Stability Result}

Since $\D$ and $\dB$ are similarity measures on metric spaces that measure the spaces indirectly via their respective persistence diagrams, one would desire that $\D$ and $\dB$ are ``stable'' with respect to the input metric spaces.
Stability in this sense means that
\begin{itemize}
    \item ``small changes'' to $X$ or $Y$ result in ``small changes'' to $\dB$ and $\D$, and
    \item if $X$ and $Y$ are ``close'', then $\D$ and $\dB$ are also ``close''.
\end{itemize}
To make these notions precise, small changes in $X$ or $Y$ and the closeness of $X$ and $Y$ are measured using some other metric for $X$ and $Y$, preferably one that compares the metric spaces directly.
A good choice for this metric is the Gromov-Hausdorff distance, $\dGH$, specified in Definition \ref{defn:Hom:GH},
Recall the bound on $\dB$ in terms of $\dGH$ given by Chazal et al.~\cite{ChdeSOu2014} in Theorem \ref{thm:Hom:BnStb}: 
$\dB(X,Y)\leq 2\dGH(X,Y)$.
When we have a bijective correspondence between $X$ and $Y$ and hence a metric decomposition, we obtain a much stronger stability bound.

\begin{theorem}
  \label{thm:Hom:dN_stabil}
  \[ \D(X,Y) \leq \frac{2\norm{\Delta}}{\diam(Y)} \,. \]
\end{theorem}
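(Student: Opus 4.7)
The plan is to bound the distortion of the bijection $f:X\to Y$ viewed as a correspondence between the normalized metric spaces $X/\diam(X)$ and $Y/\diam(Y)$, and then invoke Theorem \ref{thm:Hom:BnStb} together with the definition of $\dGH$ as half an infimum of distortions, which yields $\dB \leq \dis(f)$ on these normalized spaces.

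Substituting the optimal metric decomposition $d_Y(f(x_i), f(x_j)) = s^* d_X(x_i, x_j) + [\Delta]_{ij}$ into the distortion on the normalized spaces gives
\[
  \dis(f) = \max_{i,j}\left|\,\frac{d_X(x_i,x_j)}{\diam(X)}-\frac{s^*d_X(x_i,x_j)+[\Delta]_{ij}}{\diam(Y)}\,\right|.
\]
Splitting this via the triangle inequality produces a ``scaling mismatch'' contribution bounded by $\diam(X)\,|1/\diam(X)-s^*/\diam(Y)| = |\diam(Y)-s^*\diam(X)|/\diam(Y)$, plus a ``decomposition error'' contribution bounded by $\norm{\Delta}/\diam(Y)$.

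The crux is therefore the inequality $|s^*\diam(X)-\diam(Y)|\leq\norm{\Delta}$, which I would establish by examining matrix entries at positions where the two diameters are attained. Let $(i^*,j^*)$ realize $\diam(X)$; then $d_Y(f(i^*),f(j^*)) = s^*\diam(X)+[\Delta]_{i^*j^*}\leq\diam(Y)$ forces $s^*\diam(X)\leq\diam(Y)+\norm{\Delta}$. Symmetrically, if $(k,l)$ realizes $\diam(Y)$, then $\diam(Y)=s^*d_X(k,l)+[\Delta]_{k,l}\leq s^*\diam(X)+\norm{\Delta}$, giving the reverse direction.

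Plugging these bounds back in yields $\dis(f)\leq 2\norm{\Delta}/\diam(Y)$, and $\D(X,Y)=\dB(X/\diam(X),Y/\diam(Y))\leq\dis(f)$ closes the argument. I expect the main obstacle to be the diameter-comparison lemma $|s^*\diam(X)-\diam(Y)|\leq\norm{\Delta}$: it is the non-obvious step that ties the optimal scaling factor $s^*$ to the ambient diameters of $X$ and $Y$, and without it the scaling-mismatch term cannot be controlled. The remainder consists of routine algebra with the metric decomposition combined with the standard Chazal-style stability inequality.
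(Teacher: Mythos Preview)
Your proposal is correct and follows essentially the same route as the paper: both bound the distortion of the bijection between the normalized spaces, split into a scaling-mismatch term and a decomposition-error term, and control each by $\norm{\Delta}/\diam(Y)$. The ``diameter-comparison lemma'' $|s\,\diam(X)-\diam(Y)|\leq\norm{\Delta_s}$ that you flag as the main obstacle is precisely the reverse triangle inequality $\bigl|\,\norm{sD_X+\Delta_s}-\norm{sD_X}\,\bigr|\leq\norm{\Delta_s}$ the paper invokes (since $\norm{D_Y}=\diam(Y)$ and $\norm{sD_X}=s\,\diam(X)$), so that step is in fact immediate and requires no entry-by-entry argument.
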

\begin{proof}
Let $\displaystyle f :X \rightarrow Y$ be the bijection and $\displaystyle \hat{f}: \frac{X}{\norm{D_X}} \rightarrow \frac{Y}{\norm{D_Y}}$.
Observe that \par \vspace{2mm}
\[
\D(X,Y) = \dB\left(\frac{X}{\norm{D_X}},\frac{Y}{\norm{D_Y}}\right) \leq \dis (\hat{f}) \,.
\]
Using our decomposition,
\[
  \begin{aligned}
    \op{dis}(\hat{f}) & = \max\limits_{a,b \in X}\left| \frac{d_X(a,b)}{\norm{D_X}} - \frac{sd_X(a,b) +\delta_s(a,b)}{\norm{D_Y}} \right|\\
    & = \max\limits_{a,b \in X}\frac{\big| d_X(a,b)\norm{sD_X+\Delta_s}-\norm{D_X}(sd_X(a,b) +\delta_s(a,b))\big|}{\norm{D_X}\norm{D_Y}}\\
    & = \max\limits_{a,b \in X}\frac{\big|(\norm{sD_X + \Delta_s}-\norm{sD_X})d_X(a,b)-\norm{D_X}\delta_s(a,b)\big|}{\norm{D_X}\norm{D_Y}}\\
    & \leq \max\limits_{a,b \in X}\frac{\big|\norm{sD_X+\Delta_s}-\norm{sD_X}\big|d_X(a,b) + \norm{D_X}|\delta_s(a,b)|}{\norm{D_X}\norm{D_Y}} ~~\text{ by triangle inequality.}
  \end{aligned}
\]
Observe that $\big| \norm{sD_X+\Delta_s}-\norm{sD_X} \big| \leq \norm{\Delta_s}$ via the reverse triangle inequality, so,
\[
  \begin{aligned}
    \D(X,Y) & \leq \max\limits_{a,b \in X}\frac{\norm{\Delta_s}d_X(a,b) + \norm{D_X}|\delta_s(a,b)|}{\norm{D_X}\norm{D_Y}} \\
    & \leq \max\limits_{a,b \in X}\frac{2\norm{\Delta_s}\norm{D_X}}{\norm{D_X}\norm{D_Y}} \\
    & = 2 \frac{\norm{\Delta_s}}{\norm{D_Y}} \, .
  \end{aligned}
  \]
  This holds for all $s$, and thus holds when $\Delta_s = \Delta$.
\end{proof}

\section{Applications in Dimension Reduction}

Dimension reduction (DR) techniques are used ubiquitously and are paramount for discovering latent features of data and visualizing high-dimensional data.
However, the curse of dimensionality dictates that distances tend to grow as the number of dimension grow, which can lead to a proportionate amount of scaling under certain DR techniques.
But more specifically in our context, DR techniques are necessarily bijective between its respective metric spaces, and thus exhibit metric decompositions.
Hence we seek to derive guarantees for homology preservation under DR in terms of bounds on $\D$ between the PDs of the input and reduced metric spaces.
We look at two specific dimension reduction techniques, Johnson-Lindenstrauss projections and metric multidimensional scaling, and provide specific bounds for $d_B$ and $d_N$.
We also provide a more general condition for biLipschitz functions that may serve other dimension reduction techniques.

\subsection{Johnson-Lindenstrauss Linear Projection}

A Johnson-Lindenstrauss (JL) linear projection \cite{JoLi1984} is, for some $0 < \epsilon < 1$, a linear map $f:X \rightarrow Y$ satisfying the inequality
\begin{equation} \label{eq:JLdef}
  (1-\epsilon)d_X(u,v)^{2} \leq d_Y(f(u),f(v))^{2} \leq(1+\epsilon)d_X(u,v)^{2} \text{ for all } u,v \in X\,.
\end{equation}
A mapping that satisfies this property may be desirable in many cases, as it guarantees pairwise distances do not change ``too much'' under $f$.
Such a mapping is guaranteed to exist from $\R^N$ to $\R^n$ for $n \leq N$ provided $n$ is not too small.
This result was formalized in the classical Johnson-Lindenstrauss Lemma.
\begin{lemma}[Johnson and Lindenstrauss, 1984~\cite{JoLi1984}]
\label{lem:Hom:JL}
Given $0<\epsilon<1$, a set $X$ of $m$ points in $\mathbb{R}^{N}$, and a number $n>8 \ln (m) / \epsilon^{2}$, there is a linear map $f: \mathbb{R}^{N} \rightarrow \mathbb{R}^{n}$ such that
\[
  (1-\epsilon)\|u-v\|^{2} \leq\|f(u)-f(v)\|^{2} \leq(1+\epsilon)\|u-v\|^{2} ~\text{ for all } u, v \in X\,.
\]
\end{lemma}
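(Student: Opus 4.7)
The plan is to prove existence by the probabilistic method, constructing a random linear map and showing that with strictly positive probability it preserves all pairwise squared distances within the required $(1\pm\epsilon)$ factor. Specifically, I would let $A$ be an $n \times N$ matrix whose entries are independent $N(0,1/n)$ Gaussians, and set $f(x) = Ax$. By rotational invariance of the multivariate Gaussian, for any fixed nonzero $x \in \mathbb{R}^{N}$ the random variable $\|f(x)\|^2 / \|x\|^2$ is equal in distribution to $\frac{1}{n}\sum_{i=1}^n Z_i^2$ where the $Z_i$ are i.i.d.\ standard normal; this is a normalized $\chi^2_n$ random variable with mean $1$. Hence the question reduces entirely to a concentration statement about $\chi^2_n$ around its mean, followed by a union bound over all pairs.

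The key technical step is a sharp two-sided concentration inequality. A standard Chernoff argument applied to the moment generating function of $\chi^2_n$, optimized separately for the upper and lower tails, yields a bound of the form
\[
\Pr\!\left[\,\bigl|\,\|f(x)\|^2 - \|x\|^2\,\bigr| \geq \epsilon\|x\|^2\,\right] \leq 2\exp(-n\epsilon^2/8)
\]
valid for $0<\epsilon<1$. The constant $8$ in the exponent is precisely what drives the $n > 8\ln(m)/\epsilon^2$ threshold in the lemma, so getting this constant right is crucial.

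With the concentration bound in hand, I would apply it to each of the $\binom{m}{2}$ difference vectors $u-v$ for distinct $u,v \in X$ and take a union bound:
\[
\Pr[\text{some pair fails}] \leq 2\binom{m}{2}\exp(-n\epsilon^2/8) < m^2 \exp(-n\epsilon^2/8).
\]
The hypothesis $n > 8\ln(m)/\epsilon^2$ makes the right-hand side strictly less than $1$, so there must exist at least one realization of $A$ for which $f$ preserves every squared pairwise distance within the factor $(1\pm\epsilon)$. That realization is the deterministic map promised by the statement.

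The main obstacle is getting the constants in the concentration inequality correct: the upper tail $\Pr[\|f(x)\|^2 \geq (1+\epsilon)\|x\|^2]$ and the lower tail $\Pr[\|f(x)\|^2 \leq (1-\epsilon)\|x\|^2]$ must be handled separately, each via a Chernoff optimization of a free parameter $t$ in the moment generating function $\mathbb{E}[e^{t\chi^2_n}] = (1-2t)^{-n/2}$, and one needs a careful Taylor expansion of $\ln(1\mp 2t)$ to extract the quadratic-in-$\epsilon$ exponent with the right constant. Everything else, including the union bound, is routine. It is worth noting that alternative distributions for the entries of $A$ (for instance, Achlioptas's $\pm 1/\sqrt{n}$ Rademacher construction or sparse variants) give the same qualitative conclusion, with Hoeffding- or Bernstein-type concentration replacing the Gaussian argument above.
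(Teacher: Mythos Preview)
The paper does not prove this lemma at all; it is quoted verbatim as the classical Johnson--Lindenstrauss lemma with a citation to the 1984 source and then used as a black box in the proof of Theorem~\ref{thm:Hom:JL_dB}. So there is no ``paper's proof'' to compare against. Your probabilistic-method outline (random Gaussian matrix, chi-squared concentration, union bound) is the standard modern argument and is correct in structure.

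There is, however, an arithmetic slip in your constant-chasing. You claim the per-vector tail bound $2\exp(-n\epsilon^2/8)$ and then assert that the union bound $m^2\exp(-n\epsilon^2/8)<1$ follows from $n>8\ln(m)/\epsilon^2$. It does not: $m^2\exp(-n\epsilon^2/8)<1$ requires $n\epsilon^2/8>2\ln m$, i.e.\ $n>16\ln(m)/\epsilon^2$. To hit the constant $8$ in the lemma's hypothesis you need the sharper per-tail bound $\exp(-n\epsilon^2/4)$, which the Chernoff optimization on the $\chi^2_n$ MGF does deliver: the lower tail gives exactly $-\epsilon-\ln(1-\epsilon)\geq \epsilon^2/2$, and the upper tail gives $\epsilon-\ln(1+\epsilon)\geq \epsilon^2/2-\epsilon^3/3$, both of which feed into an exponent of roughly $-n\epsilon^2/4$ rather than $-n\epsilon^2/8$. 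With that correction the union bound closes exactly at $n>8\ln(m)/\epsilon^2$. This is a bookkeeping issue, not a structural gap, but since you explicitly flagged the constant as ``precisely what drives'' the threshold, it is worth getting right.
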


Because pairwise distances under a $\JL$ mapping do not change too much, we might expect a small difference in persistence as well, i.e., a small bottleneck distance.
However, this pairwise distance bound is multiplicative with respect to $d_X$, which leads to a potentially large bottleneck distance.

\begin{lemma}
\label{lem:Hom:JL_lem}
 Let $ \epsilon >0$.
 If $\displaystyle \left|\frac{\op{d}(f(x),f(x'))}{\op{d}(x,x')}-1 \right| < \epsilon$ for all $x,x' \in X$, then
 \[ \op{dis}(f)< \epsilon\op{diam}(X) \,. \]
\end{lemma}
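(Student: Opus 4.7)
The plan is to unpack the definition of distortion and directly manipulate the hypothesis. Recall that $\dis(f)$ is the supremum over pairs $(x,x') \in X \times X$ of $|d_X(x,x') - d_Y(f(x),f(x'))|$, so it suffices to bound this quantity by $\epsilon \diam(X)$ for each pair.

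First I would handle the trivial case separately: if $x = x'$, then both $d_X(x,x')$ and $d_Y(f(x),f(x'))$ are zero, so the contribution to the distortion is $0 < \epsilon \diam(X)$ (this uses that $X$ is nontrivial, so $\diam(X) > 0$).

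For the main case $x \neq x'$, I would multiply the hypothesis through by $d_X(x,x') > 0$ to obtain
\[
  |d_Y(f(x),f(x')) - d_X(x,x')| < \epsilon \cdot d_X(x,x').
\]
Since $d_X(x,x') \leq \diam(X)$ by definition of diameter, the right-hand side is at most $\epsilon \diam(X)$, yielding $|d_X(x,x') - d_Y(f(x),f(x'))| < \epsilon \diam(X)$ for every pair. Taking the supremum over all pairs (achieved, since $X$ is finite) produces $\dis(f) \leq \epsilon \diam(X)$; the argument actually gives strict inequality because the finite supremum is attained at some pair where the strict inequality above applies.

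There is no real obstacle here; the only subtlety is being careful about the strict versus non-strict inequality when passing to the supremum, which is handled by finiteness of $X$, and handling the diagonal pairs $x = x'$ where the hypothesis as stated does not directly apply. Both are routine bookkeeping.
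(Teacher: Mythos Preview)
Your proof is correct and follows essentially the same approach as the paper: rewrite the hypothesis as $|d(f(x),f(x'))-d(x,x')|<\epsilon\,d(x,x')$ and then pass to the supremum, bounding $d(x,x')$ by $\diam(X)$. You are in fact more careful than the paper, which glosses over the $x=x'$ case and the preservation of strict inequality under the supremum.
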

\begin{proof}
Let $x,x' \in X$, where $(1-\epsilon)\op{d}(x,x')< d(f(x),f(x')) < (1+\epsilon)\op{d}(x,x')$.
Subtracting $\op{d}(x,x')$ gives
\[ -\epsilon\op{d}(x,x')< d(f(x),f(x')) - \op{d}(x,x')< \epsilon\op{d}(x,x') \, .\]
Hence $\left| \op{d}(f(x),f(x')) - \op{d}(x,x') \right| < \epsilon \op{d}(x,x')$.
Taking the supremum over all $x,x' \in X$ on both sides gives us our result.
\end{proof}

\begin{theorem}[Johnson-Lindenstrauss Homology Preservation]
  \label{thm:Hom:JL_dB}
  Let $X \subset \R^N$ with $|X|=m$, and let $\epsilon \in (0,1)$.
  If $f:\R^N \rightarrow \R^n$ is a JL-linear map ($n>8 \ln (m) / \epsilon^{2}$), then
  \[ \dB(X,f(X)) <\epsilon\op{diam}(X) \,. \]
\end{theorem}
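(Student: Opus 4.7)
The plan is to reduce the JL hypothesis to the hypothesis of Lemma \ref{lem:Hom:JL_lem}, then chain that lemma with the Chazal--de Silva--Oudot stability bound (Theorem \ref{thm:Hom:BnStb}).

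First I would unpack the JL inequality \eqref{eq:JLdef}. Dividing through by $d_X(u,v)^2$ (assuming $u \neq v$) gives
\[
1-\epsilon \;\leq\; \frac{d_Y(f(u),f(v))^2}{d_X(u,v)^2} \;\leq\; 1+\epsilon\,,
\]
so taking square roots yields $\sqrt{1-\epsilon} \leq d_Y(f(u),f(v))/d_X(u,v) \leq \sqrt{1+\epsilon}$. The next step, and the only genuinely non-bookkeeping step, is to pass from these square-root bounds to a clean $\epsilon$-bound on $|d_Y(f(u),f(v))/d_X(u,v) - 1|$.

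For the upper side, I would use $\sqrt{1+\epsilon} < 1+\epsilon$, which holds for all $\epsilon > 0$. For the lower side I need $1 - \sqrt{1-\epsilon} < \epsilon$, equivalently $\sqrt{1-\epsilon} > 1-\epsilon$; this follows from $(1-\epsilon) > (1-\epsilon)^2$ whenever $0 < \epsilon < 1$, which is exactly the hypothesis. Both strict inequalities hold uniformly in $u,v$, so
\[
\left|\,\frac{d_Y(f(u),f(v))}{d_X(u,v)} - 1\,\right| < \epsilon \quad \text{for all distinct } u,v \in X\,.
\]
This is precisely the hypothesis of Lemma \ref{lem:Hom:JL_lem}, and applying it gives $\dis(f) < \epsilon\,\diam(X)$.

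Finally, I would close by invoking Theorem \ref{thm:Hom:BnStb} together with the standard observation that the graph of the bijection $f$ is itself a correspondence in $\mathcal{R}(X, f(X))$, so $\dGH(X, f(X)) \leq \tfrac{1}{2}\dis(f)$ by Definition \ref{defn:Hom:GH}. Chaining,
\[
\dB(X, f(X)) \;\leq\; 2\,\dGH(X, f(X)) \;\leq\; \dis(f) \;<\; \epsilon\,\diam(X)\,,
\]
which is the claimed bound. The only delicate point in the argument is making sure the strict inequality in the JL hypothesis actually survives both square-root passes; aside from that, the proof is a direct composition of previously established results.
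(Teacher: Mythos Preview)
Your proof is correct and follows essentially the same route as the paper: reduce the squared JL inequality to a non-squared ratio bound via $1-\epsilon < \sqrt{1-\epsilon}$ and $\sqrt{1+\epsilon} < 1+\epsilon$, invoke Lemma \ref{lem:Hom:JL_lem} to bound $\dis(f)$, and finish via $\dB \leq 2\dGH \leq \dis(f)$. If anything, you are more explicit than the paper in spelling out the final $\dGH$-to-$\dB$ step and in tracking the strict inequalities through the square-root passage.
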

\begin{proof}
  Let $0<\epsilon<1$, and $f$ be a Johnson-Lindenstrauss projection with $n > 8 \ln (m) / \epsilon^{2}$.
  Then by the Johnson-Lindenstrauss lemma (Lemma \ref{lem:Hom:JL}), we have
  \[
  (1-\epsilon)\|x-x'\|^{2} \leq\|f(x)-f(x')\|^{2} \leq(1+\epsilon)\|x-x'\|^{2} \,.
  \]
  Taking positive roots, we have
  \[
  \sqrt{1-\epsilon}\|x-x'\| \leq\|f(x)-f(x')\| \leq\sqrt{1+\epsilon}\|x-x'\|\,,
  \]
  and since $1-\epsilon < \sqrt{1-\epsilon}$ and $\sqrt{1+\epsilon}<1+\epsilon$, we arrive at
  \[
  (1-\epsilon)\|x-x'\| \leq\|f(x)-f(x')\| \leq(1+\epsilon)\|x-x'\|\,.
  \]
  Hence, $\displaystyle \left| \frac{\|f(x)-f(x')\|}{\|x-x'\|}-1\right|<\epsilon$.
  Applying Lemma \ref{lem:Hom:JL_lem} gives us the desired result.
\end{proof}

We can see that the bound in Theorem \ref{thm:Hom:JL_dB} can be large when $\diam(X)$ is large.
But this is not an issue with the normalized bottleneck since our spaces are normalized.

\begin{corollary}
  \label{cor:Hom:JL_dN}
  For a JL-Linear map, $\D(X,f(X))< \epsilon$\,.
\end{corollary}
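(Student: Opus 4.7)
The plan is to combine the scale invariance of $\D$ with a direct bound on the distortion of the normalized bijection. By Theorem~\ref{thm:Hom:dN_scaleIn}, I may assume $\diam(X) = 1$ without loss of generality, since this rescaling leaves $\D$ invariant and preserves the multiplicative condition~(\ref{eq:JLdef}). The standard chain $\dB \leq 2\dGH \leq \dis$ applied to any bijection (combining Definition~\ref{defn:Hom:GH} with Theorem~\ref{thm:Hom:BnStb}) yields
\[
\D(X, f(X)) \;=\; \dB\!\left(X,\ \frac{f(X)}{\diam(f(X))}\right) \;\leq\; \dis(\hat f),
\]
where $\hat f(x) := f(x)/\diam(f(X))$, so the goal reduces to showing $\dis(\hat f) < \epsilon$.

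For each pair $x, y \in X$, set $r = \|x - y\| \in [0, 1]$, $\mu = \|f(x) - f(y)\|/\|x - y\|$, and $\rho = \diam(f(X))$. Taking square roots in~(\ref{eq:JLdef}) gives $\mu, \rho \in [\sqrt{1 - \epsilon},\ \sqrt{1 + \epsilon}]$, and the maximality in the definition of $\rho$ enforces $r\mu \leq \rho$ for every pair. The summand inside $\dis(\hat f)$ simplifies to $r|\rho - \mu|/\rho$, which I would maximize via a case split: when $\mu \leq \rho$, the quantity $r(\rho - \mu)/\rho$ is monotone in $r$ so is maximized at $r = 1$, giving $1 - \mu/\rho$; when $\mu > \rho$, the constraint $r \leq \rho/\mu$ caps $r(\mu - \rho)/\rho$ at $1 - \rho/\mu$. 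Optimizing $\mu, \rho$ over the JL interval $[\sqrt{1-\epsilon},\sqrt{1+\epsilon}]$ in either case yields the common bound $\dis(\hat f) \leq 1 - \sqrt{(1-\epsilon)/(1+\epsilon)}$. A short algebraic check, amounting to $(1-\epsilon)(1+\epsilon) = 1 - \epsilon^2 < 1$, shows $1 - \sqrt{(1-\epsilon)/(1+\epsilon)} < \epsilon$ for every $\epsilon \in (0, 1)$, closing the argument.

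The main obstacle will be obtaining this sharper distortion bound. Applying only the loose multiplicative bound $\mu \in [1-\epsilon, 1+\epsilon]$ (as used in the proof of Theorem~\ref{thm:Hom:JL_dB}) together with $\rho \in [1-\epsilon, 1+\epsilon]$ produces $\dis(\hat f) \leq 2\epsilon/(1+\epsilon)$, which exceeds $\epsilon$ whenever $\epsilon \in (0, 1)$. Retaining the square-root form of the JL bound and exploiting the diameter constraint $r\mu \leq \rho$ are both essential; without the latter, the case $\mu > \rho$ cannot be tamed to obtain the stated bound.
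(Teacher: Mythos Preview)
Your argument is correct, and in fact more careful than what the paper itself provides. The paper states Corollary~\ref{cor:Hom:JL_dN} without proof, offering only the remark that ``this is not an issue with the normalized bottleneck since our spaces are normalized''---implicitly suggesting that Theorem~\ref{thm:Hom:JL_dB} applied to $X/\diam(X)$ suffices. But as you observe, $\D(X,f(X))$ normalizes $f(X)$ by $\diam(f(X))$, not by $\diam(X)$, so Theorem~\ref{thm:Hom:JL_dB} does not apply directly to the normalized pair.

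Your route---bounding $\dis(\hat f)$ directly, retaining the square-root JL interval $[\sqrt{1-\epsilon},\sqrt{1+\epsilon}]$, and using the diameter constraint $r\mu\le\rho$---gives the clean bound $1-\sqrt{(1-\epsilon)/(1+\epsilon)}<\epsilon$. Your diagnosis in the last paragraph is also on target: the loose interval $[1-\epsilon,1+\epsilon]$ yields only $2\epsilon/(1+\epsilon)>\epsilon$, and dropping the constraint $r\mu\le\rho$ lets the case $\mu>\rho$ blow up to $\sqrt{(1+\epsilon)/(1-\epsilon)}-1>\epsilon$. So both refinements are genuinely needed, and neither the stability result (Theorem~\ref{thm:Hom:dN_stabil}) nor the biLipschitz bound (Corollary~\ref{cor:Hom:Lip_dN}) would deliver the strict $\epsilon$ bound without this extra work. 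In short, you have supplied the argument the paper omits.
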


\subsection{Metric Multidimensional Scaling}

Metric multidimensional scaling (mMDS) is a generalization of the classical multidimensional scaling (MDS) \cite{CoCo2000}.
We introduce the framework of mMDS, prove a new result on the projection map of mMDS (Lemma \ref{lem:Hom:mMDS_lem}), and use it to derive new bounds on $\dB$ and $\D$ between the input and output spaces of mMDS.
The mMDS framework orthogonally projects data onto a subspace chosen to minimize mean-squared error or to maximize variance.
More precisely, given an input metric space $X=(\{x_1,\dots,x_n\},d_X)$ and desired reduced dimension $m$, we find a centered data set $\Tilde{X} = \{\Tilde{x}_1,\ldots,\Tilde{x}_n\}\subset \R^m$ such that $\sum\limits_{i,j=1}^n(d(x_i,x_j)-\norm{\Tilde{x_i}-\Tilde{x}_j})^2$ is minimized.
Such a projection can be achieved in two steps.
\begin{enumerate}
    \item \label{mMDSRlzn} Obtain a realization $\Phi$ of $X$ in $\R^k$ for some $k$, i.e., $\Phi :X \rightarrow \Phi(X) \subset \R^k$ is an isometry.
    \item Orthogonally project the realized data onto the first $m \leq d$ dominant eigenvectors of the covariance matrix $C(\Phi(X))$.
\end{enumerate}

To find a realization of $X$ in $\R^k$, recall that $-\frac{1}{2}D_X^{\circ^2} = C_nG_XC_n$, where $G_X$ is the Gram matrix of $X$, and $C_n = I_n-\frac{1}{n}\mathbf{1}_n$ is the centering matrix.
Since $G_X$ is positive semidefinite, it has a unique root $Z = [z_1 \dots z_n] \in \R^{k \times n}$ so that $Z^TZ = G_X$, for some $k$.
The realization of $X$ are the columns of $Z$, i.e., $\Phi(X) = \{z_1, \dots, z_n\}$.

With our realization matrix $Z$ we can compute the singular-value decomposition  $Z = U\Sigma V^T$, where we choose $\Sigma = \op{diag}(\sigma_1,\dots,\sigma_n)$ with $\sigma_1 \geq \ldots \geq \sigma_n$.
The eigenvectors of $C(Z) = ZZ^T = U\Sigma^2U^T$ are precisely the columns of $U$, and the eigenvalues $\lambda_i$ of $C(Z)$ are precisely $\sigma_i^2$.
Taking the first $m$ columns of $U$, we perform an orthogonal projection.

\begin{definition}[metric multidimensional scaling (mMDS)]
  \label{defn:Hom:mMDS}
  Let $X=(\{x_i\},d_X),\ i=1, \dots, n$ be a metric space, $G_X = U\Sigma^2 U^T$ the corresponding Gram matrix, and $Z = [z_1 \cdots z_n]$ a realization of $X$ in $\R^k$ where $G_X = Z^TZ$.
  Let $0 < m \leq d$, and $\Tilde{U} = \left[u_1 \cdots  u_m\right]$ be the truncated matrix consisting of the first $m$ dominant eigenvectors of $G_X$.
  The mMDS reduction $P_X^{(m)} : X \rightarrow \R^m$ is the map $P_X^{(m)}(x_i) = \Tilde{x_i}= \op{proj}_{\op{Im}(\Tilde{U})}z_i$.
\end{definition}

We assume that the realization $\Phi$ in Step \ref{mMDSRlzn} above is attainable, even if this computation may be hard (in fact, such a realization may not be guaranteed to exist in all cases).
More generally, the problem of mMDS is NP-hard but good approximation algorithms have been proposed \cite{DeHeKoLyUr2021}, and it is considered to work well in practice \cite{CoCo2000}.
Our focus is on deriving bounds on homology preservation by mMDS with respect to $\dB$ and $\D$.

\begin{remark}
  $P_X^{(m)}$ can equivalently be represented in matrix form as $P \in \R^{k \times k}$ so that $\Tilde{x_i} = P z_i$.
  The corresponding data matrix of the image $\displaystyle P_X^{(m)}(\{x_i\})$ is then $\Tilde{X} := [P z_1  \cdots P z_n] = PZ$.
  In fact, it can be shown that $\displaystyle P = \Tilde{U}(\Tilde{U}^T\Tilde{U})^{-1}\Tilde{U}^T$.
\end{remark}

\begin{lemma}
  \label{lem:Hom:mMDS_lem}
  Let $Z = U\Sigma V^T$ be the singular value decomposition (SVD) of $Z$.
  Then $P$ can be diagonalized as $P = UI_{|_m}U^T$, and $\Tilde{X} = U \Sigma_{|_m} V^T$
  {\rm (see Table \ref{tab:not} for $I_{|_m}, \Sigma_{|_m}$ notation)}.
\end{lemma}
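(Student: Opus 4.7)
The plan is to exploit the preceding remark that $P = \tilde{U}(\tilde{U}^T\tilde{U})^{-1}\tilde{U}^T$ and then read off both claims from a direct manipulation of the SVD $Z = U\Sigma V^T$. The essential observation is that because $U$ comes from an SVD it has orthonormal columns, so its first $m$ columns $\tilde{U}$ satisfy $\tilde{U}^T\tilde{U} = I_m$. This collapses the projection formula to $P = \tilde{U}\tilde{U}^T$, and from there the rest of the argument is bookkeeping with $I_{|_m}$.

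First I would verify the diagonalization of $P$. Writing $U = [\tilde{U}\ \ U']$ where $U'$ holds the remaining $k-m$ columns, right-multiplication of $U$ by $I_{|_m}$ zeroes out all but the first $m$ columns, so $UI_{|_m} = [\tilde{U}\ \ 0]$. Consequently
\[
UI_{|_m}U^T = [\tilde{U}\ \ 0]\begin{bmatrix} \tilde{U}^T \\ (U')^T\end{bmatrix} = \tilde{U}\tilde{U}^T,
\]
which combined with $\tilde{U}^T\tilde{U} = I_m$ (so that $(\tilde{U}^T\tilde{U})^{-1} = I_m$) yields $P = \tilde{U}\tilde{U}^T = UI_{|_m}U^T$.

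Next I would compute $\tilde{X} = PZ$ by substitution. Using $U^TU = I$, I obtain
\[
\tilde{X} = PZ = UI_{|_m}U^T \cdot U\Sigma V^T = UI_{|_m}\Sigma V^T.
\]
The final step is to identify $I_{|_m}\Sigma$ with $\Sigma_{|_m}$. Since $\Sigma$ is diagonal, $I_{|_m}$ commutes with it, giving $I_{|_m}\Sigma = \Sigma I_{|_m} = \Sigma_{|_m}$ as defined in Table~\ref{tab:not}. Thus $\tilde{X} = U\Sigma_{|_m}V^T$.

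The main obstacle, such as it is, is purely dimensional: one must be careful that the $I_{|_m}$ appearing in $UI_{|_m}U^T$ is $k\times k$ (so that it truncates columns of $U$), while the $I_{|_m}$ implicit in $\Sigma_{|_m}$ matches the shape of $\Sigma$. Once the convention that $\Sigma$ is square $k\times k$ (or equivalently the economy SVD is used so that $U^TU = I$ on the relevant side) is fixed, the algebra above is uniform and both conclusions follow together. Everything else is a one-line manipulation after the key reduction $(\tilde{U}^T\tilde{U})^{-1} = I_m$ from the orthonormality of the SVD factors.
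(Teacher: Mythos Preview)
Your proof is correct. For the second claim, the computation of $\tilde{X}$, your argument is line-for-line the same as the paper's. For the first claim, you take a slightly different route: you start from the explicit projection formula $P = \tilde{U}(\tilde{U}^T\tilde{U})^{-1}\tilde{U}^T$ given in the preceding remark, collapse it to $\tilde{U}\tilde{U}^T$ via orthonormality, and then identify this with $UI_{|_m}U^T$ by block multiplication. The paper instead argues spectrally: since $P$ is the orthogonal projection onto $\operatorname{span}\{u_1,\dots,u_m\}$, it satisfies $Pu_i = u_i$ for $i\leq m$ and $Pu_i = 0$ for $i>m$, so the columns of $U$ are eigenvectors of $P$ with eigenvalues $1,\dots,1,0,\dots,0$, giving $P = UI_{|_m}U^T$ directly. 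Both arguments are short and equivalent in strength; yours is purely algebraic and leans on the remark, while the paper's is a touch more conceptual and does not need the explicit formula for $P$.
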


\begin{proof}
  Since $P$ orthogonally projects onto the image of the first $m$ dominant eigenvectors of $C(X)$, We necessarily have $Pu_i = u_i$ for $0 \leq i \leq m$ and $Pu_i = 0$ for $m<i\leq d$.
Thus $P$ and $C(X)$ share eigenvectors, and the spectrum of $P$ consists of $m$ ones and $d-m$ zeros.
Thus, P is diagonalizable, and if we order the eigenvalues/eigenvectors in decreasing order, we have $P = UI_{|_m}U^T$.
Now, if $X = U\Sigma V^T$, then $\Tilde{X} = PX = (UI_{|_m}U^T)(U\Sigma V^T) = (UI_{|_m})(U^TU)(\Sigma V^T) = U(I_{|_m}\Sigma )V^T = U\Sigma_{|_m} V^T$.
\end{proof}

Unlike JL mappings, mMDS minimizes a more global measure of pairwise distance
\[
  J := \sum\limits_{i,j=1}^n(d(x_i,x_j)-\norm{\Tilde{x_i}-\Tilde{x_j}})^2 \,.
\]
The minimum $J$ is related directly to the eigenvalues of the covariance matrix $C(X)$:
\[\min\limits_{\Tilde{X}\subset \R^m}J = \sum\limits_{i=m+1}^{d}\lambda_i^2\ .\]
We now show that a similar relationship holds for the distortion of $P^{(m)}_X$ that depends only on the eigenvalues of the covariance matrix.

\begin{theorem}
  \label{thm:Hom:mMDS_dist}
  $\displaystyle \op{dis}(P_X^{(m)}) \leq \sqrt2 \sqrt[4]{\sum\limits_{m+1}^d \lambda_i^2}$\,.
\end{theorem}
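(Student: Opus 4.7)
The plan is to bound the distortion
$\op{dis}(P_X^{(m)}) = \max_{i,j}\bigl|\,d_X(x_i,x_j) - \|\tilde{x}_i - \tilde{x}_j\|\,\bigr|$
by a spectral tail of the covariance matrix, exploiting the SVD of the realization matrix $Z$. Since the realization $\Phi$ is an isometry, $d_X(x_i,x_j) = \|z_i - z_j\|$, and by Lemma \ref{lem:Hom:mMDS_lem} we have $\tilde{x}_i - \tilde{x}_j = P(z_i - z_j)$ where $P = U I_{|_m} U^T$ is an orthogonal projection. Orthogonal projections are nonexpansive, so $\|\tilde{x}_i - \tilde{x}_j\| \leq \|z_i - z_j\|$, and the Pythagorean identity for the complementary projection gives $\|z_i - z_j\|^2 - \|\tilde{x}_i - \tilde{x}_j\|^2 = \|(I - P)(z_i - z_j)\|^2$. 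Using the elementary inequality $a - b \leq \sqrt{a^2 - b^2}$ valid for $a \geq b \geq 0$, this reduces the task to
\[
\op{dis}(P_X^{(m)}) \;\leq\; \max_{i,j}\,\|(I - P)(z_i - z_j)\|.
\]

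Next I would compute this quantity explicitly from the SVD. By Lemma \ref{lem:Hom:mMDS_lem}, $(I - P)Z = U(I - I_{|_m})\Sigma V^T$, and since $U$ is orthogonal, $\|(I - P)(z_i - z_j)\|$ equals the Euclidean norm of the difference of the $i$-th and $j$-th columns of $(I - I_{|_m})\Sigma V^T$. The factor $(I - I_{|_m})\Sigma$ zeros out the first $m$ singular values, so column-wise unpacking gives
\[
\|(I - P)(z_i - z_j)\|^2 \;=\; \sum_{l > m} \sigma_l^{2}\,(V_{i,l} - V_{j,l})^{2}.
\]
Now I would apply Cauchy-Schwarz with $a_l = \sigma_l^2$ and $b_l = (V_{i,l} - V_{j,l})^2$, bounding the right-hand side by $\sqrt{\sum_{l > m}\sigma_l^4}\cdot\sqrt{\sum_{l > m}(V_{i,l} - V_{j,l})^4}$. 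The first factor is exactly $\sqrt{\sum_{l > m}\lambda_l^2}$ since $\lambda_l = \sigma_l^2$.

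To control the second factor I would use that $V \in \R^{n \times n}$ is orthogonal, so its rows are orthonormal; thus for $i \neq j$, $\sum_l (V_{i,l} - V_{j,l})^2 = \|V_{i,:} - V_{j,:}\|^2 = 2$. This simultaneously bounds the maximum entry $\max_l (V_{i,l} - V_{j,l})^2 \leq 2$ and the tail sum $\sum_{l > m}(V_{i,l} - V_{j,l})^2 \leq 2$, so $\sum_{l > m}(V_{i,l} - V_{j,l})^4 \leq (\max)(\sum) \leq 4$, giving a second factor of at most $2$. Putting the pieces together yields $\|(I - P)(z_i - z_j)\|^2 \leq 2\sqrt{\sum_{l > m}\lambda_l^2}$, and taking square roots delivers the claimed inequality.

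The main subtlety I anticipate is keeping the constant as tight as $\sqrt{2}$: a crude bound on $\sum_{l > m}(V_{i,l} - V_{j,l})^4$ through either the maximum alone or $(\sum b_l)^2$ would inflate the constant. The crucial observation is that orthonormality of the rows of $V$ simultaneously caps both the pointwise maximum and the total sum of $(V_{i,l} - V_{j,l})^2$ by $2$, and this product enters the final estimate only under a fourth root. The remaining ingredients---Pythagoras for orthogonal projections, the Eckart-Young-style SVD form from Lemma \ref{lem:Hom:mMDS_lem}, and Cauchy-Schwarz---are standard once the right decomposition is in hand.
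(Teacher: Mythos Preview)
Your proof is correct, and it takes a genuinely different route from the paper's. Both arguments share the opening move of passing from distances to squared distances via $|a-b|\le\sqrt{|a^2-b^2|}$ (the paper packages this as a separate Hadamard lemma on $D_Z^{\circ^2}-D_{\tilde X}^{\circ^2}$), but they diverge immediately after. The paper invokes the double-centering identity $-\tfrac12 D_Z^{\circ^2}=C_kZ^TZC_k$, rewrites the squared-distance difference as $2\,Q(\Sigma^2-\Sigma_{|_m}^2)Q^T$ with $Q=C_kV$, and then bounds the max norm by the Frobenius norm, using unitary invariance to strip $Q$. You instead work pointwise: Pythagoras gives $\|z_i-z_j\|^2-\|P(z_i-z_j)\|^2=\|(I-P)(z_i-z_j)\|^2$, the SVD diagonalizes this as $\sum_{l>m}\sigma_l^2(V_{i,l}-V_{j,l})^2$, and Cauchy--Schwarz together with the orthonormality bound $\sum_l(V_{i,l}-V_{j,l})^2=2$ yields the same constant $2\sqrt{\sum_{l>m}\lambda_l^2}$. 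Your route is more elementary and geometrically transparent, avoiding the Gram/centering machinery entirely; the paper's Frobenius-norm approach, on the other hand, is what they later reuse verbatim in bounding $\maxnorm{\Delta}$ for the $\D$ result (Theorem~\ref{thm:Hom:mMDS_opt_decomp}), so it fits more naturally into their surrounding framework.
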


We first prove a result on Hadamard product $A \circ A$ of square matrices $A$ with themselves.

\begin{lemma}
\label{lem:Hom:Hada_lem}
Let $A = [a_{ij}],B = [b_{ij}] \in \R^{n\times n}$ be square matrices with non-negative entries.
Then $\maxnorm{A-B}^2 \leq \maxnorm{A^{\circ^2}-B^{\circ^2}}$.
\end{lemma}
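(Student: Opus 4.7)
The plan is to reduce the matrix inequality to a scalar inequality about non-negative reals, then invoke the fact that the max norm is realized at a single entry. Concretely, I observe that for any $a,b\ge 0$ the identity $a^2-b^2=(a-b)(a+b)$ combined with the bound $a+b\ge|a-b|$ (which uses non-negativity of both $a$ and $b$) gives $|a^2-b^2|\ge(a-b)^2$. This is the only inequality doing real work; the rest is bookkeeping about the max norm.

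First, I would pick an index pair $(i^*,j^*)$ achieving the maximum
\[
\maxnorm{A-B}=|a_{i^*j^*}-b_{i^*j^*}|,
\]
which exists since the matrices are finite. Setting $a=a_{i^*j^*}\ge0$ and $b=b_{i^*j^*}\ge0$, the scalar lemma yields
\[
\maxnorm{A-B}^2=(a-b)^2\le|a^2-b^2|=\big|[A^{\circ^2}]_{i^*j^*}-[B^{\circ^2}]_{i^*j^*}\big|\le\maxnorm{A^{\circ^2}-B^{\circ^2}},
\]
where the last step uses that the $(i^*,j^*)$ entry of $A^{\circ^2}-B^{\circ^2}$ is bounded above in absolute value by the max norm of the whole matrix.

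The only step deserving attention is the scalar inequality $|a^2-b^2|\ge(a-b)^2$ for $a,b\ge0$. I would prove it by writing $|a^2-b^2|=|a-b|(a+b)$ and then noting $a+b\ge|a-b|$; the latter is immediate from $a,b\ge0$ since $a+b\ge a\ge a-b$ and symmetrically $a+b\ge b\ge b-a$. No obstacle arises because the non-negativity hypothesis is precisely what makes this elementary comparison work; if it were dropped, the inequality would fail (e.g.\ $a=1, b=-1$ gives $|a^2-b^2|=0$ while $(a-b)^2=4$), so the hypothesis is exactly the right one and the argument is tight.
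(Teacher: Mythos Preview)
Your proof is correct and follows essentially the same approach as the paper: both rely on the scalar inequality $(a-b)^2 \le |a-b|(a+b) = |a^2-b^2|$ for $a,b \ge 0$, which in turn rests on $|a-b|\le a+b$. The only cosmetic difference is that you fix the maximizing index $(i^*,j^*)$ up front, whereas the paper carries the $\max$ through each line of the chain; the underlying argument is identical.
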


\begin{proof}
\[
  \begin{aligned}
    \maxnorm{A-B}^2 & = (\max |a_{ij}-b_{ij}|)^2 \\
    &  =    \max |a_{ij}-b_{ij}|^2 \\
    & \leq  \max |a_{ij}-b_{ij}||a_{ij}+b_{ij}| \\
    &  =    \max |(a_{ij}-b_{ij})(a_{ij}+b_{ij})| \\
    &  =    \max |a_{ij}^2-b_{ij}^2| \\
    &  =    \maxnorm{A^{\circ^2}-B^{\circ^2}} \, .
  \end{aligned}
\]
\end{proof}

\begin{proof}[Proof of Theorem \ref{thm:Hom:mMDS_dist}]
  We note that $\op{dis}(P_X^{(m)})^2 = \maxnorm{D_X-D_{\Tilde{X}}}^2 = \maxnorm{D_Z-D_{\Tilde{X}}}^2$.
  Applying Lemma \ref{lem:Hom:Hada_lem}, we get $\op{dis}(P_X^{(m)})^2 \leq \maxnorm{D_Z^{\circ^2}-D_{\Tilde{X}}^{\circ^2}}$.
  Now, recall that Euclidean distance matrices and corresponding Gram matrices are related as $Z^TZ =  -\frac{1}{2}C_kD_Z^{\circ^2}C_k$.
  Since $C_k$ is idempotent we can rearrange this so that $-\frac{1}{2}D_Z^{\circ^2} = C_kZ^TZC_k = C_kV\Sigma^2V^TC_k = (C_kV)\Sigma^2(V^TC_k^T)$.
  Now since $C_k$ and $V$ are both unitary, $Q := C_kV$ is also unitary, so we can write $-\frac{1}{2} D_Z^{\circ^2} = Q\Sigma^2Q^T$.
  Using the same argument for $D_{\Tilde{X}}^{\circ^2}$ and applying Lemma \ref{lem:Hom:mMDS_lem}, we get $-\frac{1}{2} D_{\Tilde{X}}^{\circ^2} = Q\Sigma_{|_m}^2Q^T$.
  So now,
  \[
     \begin{aligned}
       \maxnorm{D_X^{\circ^2}-D_{\Tilde{X}}^{\circ^2}} & =  2\maxnorm{Q\Sigma^2Q^T-Q\Sigma_{|_m}^2Q^T} \\
       & \leq 2\norm{Q\Sigma^2Q^T-Q\Sigma_{|_m}^2Q^T}_F ~~\text{(Frobenius norm)}\\
       &  =   2\norm{Q(\Sigma^2-\Sigma_{|_m}^2)Q^T}_F \, .
     \end{aligned}
  \]

  Since $\norm{\cdot}_F$ is unitary invariant, the above expression is
  \[
    \begin{aligned}
      \hspace*{1in} & = 2\norm{\Sigma^2-\Sigma_{|_m}^2}_F = 2\sqrt{\sum\limits_{i=m+1}^d \lambda_i^2} \, .
    \end{aligned}
  \]
  Bringing everything together, we arrive at
  \[
    \op{dis}(P_X^{(m)})^2 ~\leq~ 2\sqrt{\sum\limits_{i=m+1}^d \lambda_i^2} \,.
  \]
  Taking the square root on both sides gives us the desired result.
\end{proof}

\begin{corollary}
  \label{cor:Hom:mMDS_dB}
  $\displaystyle \dB(X,P_X^{(m)}(X)) \leq \sqrt2 \sqrt[4]{\sum\limits_{m+1}^d \lambda_i^2}\,$.
\end{corollary}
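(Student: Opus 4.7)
The plan is to combine the distortion bound from Theorem \ref{thm:Hom:mMDS_dist} with the classical stability of the bottleneck distance (Theorem \ref{thm:Hom:BnStb}) via the Gromov--Hausdorff distance. In effect, the distortion of the mMDS projection already does all the quantitative work, and the corollary is merely a matter of packaging it through the $\dGH$--$\dB$ relationship.

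First I would observe that the mMDS reduction $P_X^{(m)}$ is a well-defined surjection from $X$ onto $P_X^{(m)}(X)$, so its graph
\[
R = \bigl\{(x, P_X^{(m)}(x)) : x \in X\bigr\} \subseteq X \times P_X^{(m)}(X)
\]
is a correspondence between these two finite metric spaces. By Definition \ref{defn:Hom:distor}, the distortion of $R$ equals $\dis(P_X^{(m)})$. Hence, from Definition \ref{defn:Hom:GH},
\[
\dGH\bigl(X, P_X^{(m)}(X)\bigr) \;\leq\; \tfrac{1}{2}\,\dis R \;=\; \tfrac{1}{2}\,\dis(P_X^{(m)}).
\]

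Next I would invoke Theorem \ref{thm:Hom:BnStb} of Chazal et al., which gives
\[
\dB\bigl(X, P_X^{(m)}(X)\bigr) \;\leq\; 2\,\dGH\bigl(X, P_X^{(m)}(X)\bigr) \;\leq\; \dis(P_X^{(m)}).
\]
Substituting the bound from Theorem \ref{thm:Hom:mMDS_dist} yields
\[
\dB\bigl(X, P_X^{(m)}(X)\bigr) \;\leq\; \sqrt{2}\,\sqrt[4]{\sum_{i=m+1}^{d} \lambda_i^2},
\]
which is exactly the stated inequality.

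There is no real obstacle here: all the difficulty has already been absorbed into Theorem \ref{thm:Hom:mMDS_dist} (in particular the Hadamard-product step in Lemma \ref{lem:Hom:Hada_lem} and the unitary invariance of the Frobenius norm applied to $\Sigma^2 - \Sigma_{|_m}^2$). The only small point worth writing cleanly is the identification of the graph of $P_X^{(m)}$ as a valid correspondence so that one can pass from $\dis(P_X^{(m)})$ to $\dGH$, but this is immediate from the definitions.
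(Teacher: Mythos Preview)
Your proposal is correct and follows exactly the route the paper takes: it combines Theorem~\ref{thm:Hom:mMDS_dist} with Theorem~\ref{thm:Hom:BnStb} via Definition~\ref{defn:Hom:GH}, using the graph of $P_X^{(m)}$ as the correspondence to pass from distortion to $\dGH$. The paper's own proof is just the one-line statement ``Follows directly from Theorem~\ref{thm:Hom:mMDS_dist}, Theorem~\ref{thm:Hom:BnStb}, and Definition~\ref{defn:Hom:GH}'', which you have unpacked faithfully.
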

\begin{proof}
  Follows directly from Theorem \ref{thm:Hom:mMDS_dist}, Theorem \ref{thm:Hom:BnStb}, and Definition \ref{defn:Hom:GH}.
\end{proof}

We now derive a corresponding bound for $\D$ under mMDS, which could be much smaller than the bound on $\dB$ in Corollary \ref{cor:Hom:mMDS_dB}.
We first bound $\maxnorm{\Delta}\,$.
\begin{theorem}
  \label{thm:Hom:mMDS_opt_decomp}
  $\displaystyle \maxnorm{\Delta} \leq \sqrt{2}\sqrt[4]{\frac{\left(\sum\limits_{i=1}^m \lambda_i^2\right)\left(\sum\limits_{i=m+1}^d \lambda_i^2\right)}{\sum\limits_{i=1}^d \lambda_i^2}}\,$.
\end{theorem}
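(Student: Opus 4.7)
The plan is to exploit the minimization in the definition of $\Delta$: since $\maxnorm{\Delta} = \min_{s} \maxnorm{D_{\tilde X} - s D_Z}$ by Theorem \ref{thm:Hom:opt_decomp_exist}, any single choice of $s > 0$ gives a valid upper bound, so I would pick $s$ to optimize the bound obtained from the linear-algebraic machinery already developed for the proof of Theorem \ref{thm:Hom:mMDS_dist}. First I would apply Lemma \ref{lem:Hom:Hada_lem} with $A = D_{\tilde X}$ and $B = s D_Z$, both entrywise non-negative for $s \geq 0$, to obtain $\maxnorm{D_{\tilde X} - s D_Z}^2 \leq \maxnorm{D_{\tilde X}^{\circ^2} - s^2 D_Z^{\circ^2}}$. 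This moves the problem into the realm of squared distance matrices, where the spectral structure is accessible.

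Next, I would reuse the two identities $-\tfrac12 D_Z^{\circ^2} = Q \Sigma^2 Q^T$ and $-\tfrac12 D_{\tilde X}^{\circ^2} = Q \Sigma_{|_m}^2 Q^T$ with the \emph{same} unitary $Q = C_k V$, exactly as in the proof of Theorem \ref{thm:Hom:mMDS_dist} and underwritten by Lemma \ref{lem:Hom:mMDS_lem}. Together they yield $D_{\tilde X}^{\circ^2} - s^2 D_Z^{\circ^2} = 2\, Q (s^2 \Sigma^2 - \Sigma_{|_m}^2) Q^T$. Passing from $\maxnorm{\cdot}$ to the Frobenius norm $\|\cdot\|_F$ and invoking unitary invariance reduces the problem to the Frobenius norm of the diagonal matrix $s^2 \Sigma^2 - \Sigma_{|_m}^2$, whose nonzero entries are $(s^2 - 1)\lambda_i$ for $i \leq m$ and $s^2 \lambda_i$ for $m < i \leq d$. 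Setting $A := \sum_{i=1}^m \lambda_i^2$ and $B := \sum_{i=m+1}^d \lambda_i^2$, I therefore get $\|s^2 \Sigma^2 - \Sigma_{|_m}^2\|_F^2 = (s^2 - 1)^2 A + s^4 B$, so $\maxnorm{D_{\tilde X}^{\circ^2} - s^2 D_Z^{\circ^2}} \leq 2\sqrt{(s^2-1)^2 A + s^4 B}$.

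Finally, I would minimize the one-variable function $g(t) = (t-1)^2 A + t^2 B$ over $t = s^2 \geq 0$; setting $g'(t) = 0$ gives $t^\star = A/(A+B)$ and hence $g(t^\star) = AB/(A+B)$. Choosing $s = \sqrt{t^\star}$, chaining the estimates yields $\maxnorm{\Delta}^2 \leq \maxnorm{D_{\tilde X} - s D_Z}^2 \leq 2\sqrt{AB/(A+B)}$, and a square root delivers the stated $\maxnorm{\Delta} \leq \sqrt{2}\,\bigl(AB/(A+B)\bigr)^{1/4}$. The main obstacle is really bookkeeping in the second step: everything hinges on the fact that the \emph{same} unitary $Q$ simultaneously diagonalizes $D_Z^{\circ^2}$ and $D_{\tilde X}^{\circ^2}$, so that $s^2 \Sigma^2 - \Sigma_{|_m}^2$ stays diagonal and its Frobenius norm cleanly splits into the $A$ and $B$ contributions needed for the one-variable minimization. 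This is precisely the structural content supplied by Lemma \ref{lem:Hom:mMDS_lem} (shared right-singular vectors $V$ for $Z$ and $\tilde X$), which is why the generalization from $s = 1$ (Theorem \ref{thm:Hom:mMDS_dist}) to optimal $s$ works without any new spectral input.
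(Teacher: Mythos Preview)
Your proposal is correct and follows essentially the same approach as the paper: apply Lemma~\ref{lem:Hom:Hada_lem}, use the simultaneous spectral representation of $D_Z^{\circ^2}$ and $D_{\tilde X}^{\circ^2}$ via the common unitary $Q$ (as in the proof of Theorem~\ref{thm:Hom:mMDS_dist} and Lemma~\ref{lem:Hom:mMDS_lem}), pass to the Frobenius norm, and minimize the resulting one-variable expression in $s^2$. The only cosmetic difference is that the paper writes the decomposition as $D_Z - sD_{\tilde X}$ whereas you write $D_{\tilde X} - sD_Z$; either ordering leads to the same quadratic $(s^2-1)^2A + s^4B$ and hence the same optimum.
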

\begin{proof}
  $\maxnorm{\Delta_s}^4 = \maxnorm{D_Z-sD_{\Tilde{X}}}^4$.
  Applying Lemma \ref{lem:Hom:Hada_lem}, we observe that
  \[
     \begin{aligned}
       \maxnorm{\Delta_s}^4 & \leq \maxnorm{D_Z^{\circ^2}-(sD_{\Tilde{X}})^{\circ^2}}^2 \\
       & = \maxnorm{D_Z^{\circ^2}-s^2D_{\Tilde{X}}^{\circ^2}}^2 \\
       & = 4\maxnorm{Q\Sigma_{|_m}Q^T - s^2Q\Sigma Q^T}^2 \\
       & \leq 4\norm{\Sigma_{|_m} - s^2\Sigma}_F^2 \\
       & = 4\left(\sum\limits_{1}^{m}\lambda_i^2(1-s^2)^2 + \sum\limits_{m+1}^{d}s^4\lambda_i^2\right)\,.
     \end{aligned}
  \]

Now, since $\maxnorm{\Delta}\leq \maxnorm{\Delta_s}$ for all $s>0$, we have:
\[
  \frac{1}{4}\maxnorm{\Delta}^4\leq (1-s^2)^2\sum\limits_{1}^{m}\lambda_i^2 + s^4\sum\limits_{m+1}^{d}\lambda_i^2 ~\text{ for all } s.
\]
The right-hand side is a polynomial in $s$ and is minimized when $s = \displaystyle \sqrt{\frac{\sum\limits_1^m \lambda_i^2}{\sum\limits_1^d \lambda_i^2}}$.
For convenience, let $\sig{p}{q} := \sum\limits_{p}^{q}\lambda_i^2$, so this expression is minimized at $\displaystyle s = \sqrt{\frac{\sig{1}{m}}{\sig{1}{d}}}$.
Plugging this in, we obtain

\[
   \begin{aligned}
     \frac{1}{4}\maxnorm{\Delta}^4 & \leq \left(1-\frac{\sig{1}{m}}{\sig{1}{d}}\right)^2 \sum\limits_{1}^{m}\lambda_i^2 + \left(\frac{\sig{1}{m}}{\sig{1}{d}}\right)^2\sum\limits_{m+1}^{d}\lambda_i^2 \\
     & = \left(1-\frac{\sig{1}{m}}{\sig{1}{d}} \right)^2\sig{1}{m} + \left(\frac{\sig{1}{m}}{\sig{1}{d}} \right)^2\sig{m+1}{d} \\
     & = \left(\frac{\sig{m+1}{d}}{\sig{1}{d}}\right)^2\sig{1}{m} + \left(\frac{\sig{1}{m}}{\sig{1}{d}} \right)^2\sig{m+1}{d} \\
     & = \left(\frac{\sig{1}{m}\sig{m+1}{d}}{(\sig{1}{d})^2} \right)\sig{1}{d} \\
     & = \frac{\sig{1}{m}\sig{m+1}{d}}{\sig{1}{d}} \, .
   \end{aligned}
\]

Thus,
\[
   \maxnorm{\Delta}^4 ~\leq~ 4\frac{\left(\sum\limits_{i=1}^m \lambda_i^2\right)\left(\sum\limits_{i=m+1}^d \lambda_i^2\right)}{\sum\limits_{i=1}^d \lambda_i^2} \,.
\]
Quartic-rooting both sides gives the desired result.
\end{proof}

\begin{corollary}
\label{cor:Hom:mMDS_dN}
  $\D(X,P^{(m)}_X(X)) \leq \frac{2\sqrt{2}}{\diam(P^{(m)}_X(X))}\sqrt[4]{\frac{\left(\sum\limits_{i=1}^m \lambda_i^2\right)\left(\sum\limits_{i=m+1}^d \lambda_i^2\right)}{\left(\sum\limits_{i=1}^d \lambda_i^2\right)}}\,$.
\end{corollary}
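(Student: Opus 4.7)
The plan is to obtain this corollary as an immediate consequence of chaining together the two main results that immediately precede it: the general stability bound for $\D$ in Theorem \ref{thm:Hom:dN_stabil} and the mMDS-specific bound on $\maxnorm{\Delta}$ in Theorem \ref{thm:Hom:mMDS_opt_decomp}. Both pieces are already in place, so the proof will be a two-line calculation.

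First, I would set $Y = P^{(m)}_X(X)$ with the canonical bijection $f: X \to Y$ given by $x_i \mapsto \tilde{x}_i = P z_i$ that arises directly from the mMDS construction in Definition \ref{defn:Hom:mMDS}. This is precisely the setting in which the metric decomposition of Definition \ref{defn:Hom:met_scaling} is defined, and it is the same bijection under which $\maxnorm{\Delta}$ was controlled in Theorem \ref{thm:Hom:mMDS_opt_decomp}. Applying Theorem \ref{thm:Hom:dN_stabil} to this pair yields
\[
  \D(X, P^{(m)}_X(X)) \;\leq\; \frac{2\,\maxnorm{\Delta}}{\diam(P^{(m)}_X(X))}\,.
\]

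Next, I would plug in the bound from Theorem \ref{thm:Hom:mMDS_opt_decomp}, namely $\maxnorm{\Delta} \leq \sqrt{2}\,\sqrt[4]{\sig{1}{m}\,\sig{m+1}{d}\,/\,\sig{1}{d}}$, into the right-hand side. The constant $2 \cdot \sqrt{2}$ consolidates to $2\sqrt{2}$, and the quartic-root expression carries through unchanged, producing exactly the statement of the corollary.

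I do not anticipate any real obstacle here, since both inputs have already been proved. The only small point worth noting in the write-up is that $\Delta$ appearing in the stability theorem refers to the optimal decomposition associated with the mMDS bijection, which is the same object bounded in Theorem \ref{thm:Hom:mMDS_opt_decomp}; once this identification is made, the result drops out. Accordingly, the proof in the manuscript can simply cite these two theorems and assemble the inequality.
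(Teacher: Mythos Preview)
Your proposal is correct and mirrors the paper's own proof exactly: the paper simply states that the corollary follows from applying Theorem~\ref{thm:Hom:mMDS_opt_decomp} to Theorem~\ref{thm:Hom:dN_stabil}, and your write-up is just a slightly more explicit version of that chaining. Your identification of $Y = P^{(m)}_X(X)$ with the mMDS bijection is precisely what is needed so that the $\Delta$ of Theorem~\ref{thm:Hom:dN_stabil} coincides with the $\Delta$ bounded in Theorem~\ref{thm:Hom:mMDS_opt_decomp} and so that $\diam(Y) = \diam(P^{(m)}_X(X))$ appears in the denominator.
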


\begin{proof}
This follows from applying Theorem \ref{thm:Hom:mMDS_opt_decomp} to Theorem \ref{thm:Hom:dN_stabil}.
\end{proof}

\subsection{BiLipschitz Functions}

Finally, we present bounds for $\D$ under a general class of biLipschitz mappings.

\begin{theorem}
  \label{thm:Hom:Lip_opt_decomp}
  Let $f:X \rightarrow Y$ be $k$-biLipschitz.
  Then $\displaystyle \norm{\Delta} \leq \left|\frac{k^2-1}{2k}\right|\norm{D_X}\,$.
\end{theorem}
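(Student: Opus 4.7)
The plan is to exhibit an explicit scalar $s$ that already makes $\maxnorm{\Delta_s}$ small, and then invoke optimality of $\Delta$ to transfer the bound. Since $\maxnorm{\Delta}\leq \maxnorm{\Delta_s}$ for every $s>0$ by definition of $\Delta$, it suffices to find one good choice of $s$.

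The $k$-biLipschitz hypothesis gives, for any pair $a,b\in X$ with $a\neq b$,
\[
\tfrac{1}{k}\,d_X(a,b) \;\leq\; d_Y(f(a),f(b)) \;\leq\; k\,d_X(a,b),
\]
so the ratio $r(a,b) := d_Y(f(a),f(b))/d_X(a,b)$ lies in the interval $[1/k,\,k]$. The entry of $\Delta_s$ at the pair $(a,b)$ is $d_Y(f(a),f(b))-s\,d_X(a,b) = (r(a,b)-s)\,d_X(a,b)$, so the natural thing to do is pick the $s$ that minimizes the worst-case deviation of $r(a,b)$ from $s$. That is the midpoint of $[1/k,k]$, namely
\[
s^{\ast} \;=\; \frac{k + 1/k}{2} \;=\; \frac{k^{2}+1}{2k}.
\]

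With this choice, a direct calculation shows
\[
\tfrac{1}{k}-s^{\ast} \;=\; -\frac{k^{2}-1}{2k}, \qquad k-s^{\ast} \;=\; \frac{k^{2}-1}{2k},
\]
so $|r(a,b)-s^{\ast}|\leq (k^{2}-1)/(2k)$ for every pair $a,b\in X$ (with the absolute value allowing for $k<1$ as well). Multiplying by $d_X(a,b)$ and taking the maximum over pairs gives the entrywise bound
\[
\maxnorm{\Delta_{s^{\ast}}} \;=\; \max_{a,b}\bigl|r(a,b)-s^{\ast}\bigr|\,d_X(a,b) \;\leq\; \left|\frac{k^{2}-1}{2k}\right|\,\maxnorm{D_X}.
\]
Finally, since $\Delta$ is the optimal decomposition, $\maxnorm{\Delta}\leq \maxnorm{\Delta_{s^{\ast}}}$, which yields the claim.

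The only non-routine step is guessing the optimal $s^{\ast}$; once the midpoint of $[1/k,k]$ is identified as the minimizer of the symmetric deviation, everything else is a direct entrywise estimate. No subtlety with the max-norm arises because the bound $|r(a,b)-s^{\ast}|\leq (k^{2}-1)/(2k)$ is uniform in the pair $(a,b)$, so it passes cleanly through the $\maxnorm{\cdot}$.
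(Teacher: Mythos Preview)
Your proof is correct and follows essentially the same approach as the paper: both choose the explicit scalar $s=(k+1/k)/2$, use the $k$-biLipschitz bound to control each entry of $\Delta_s$ by $\left|\frac{k^2-1}{2k}\right|d_X(a,b)$, take the max-norm, and finish by invoking the optimality of $\Delta$. Your framing via the ratio $r(a,b)\in[1/k,k]$ is a clean way to motivate the midpoint choice, but the underlying computation is identical.
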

\begin{proof}
  Since $f$ is $k$-biLipschitz we have that $\displaystyle \frac{1}{k}d_X \leq d_Y \leq kd_X$ for some $k \geq 1$.
  Using our metric decomposition, this becomes $\displaystyle \frac{1}{k}d_X \leq sd_X + \lambda_s \leq kd_X$.
  Rearranging we have $\displaystyle \frac{1}{k}-s \leq \frac{\lambda_s}{d_X} \leq k-s$.
  Choosing $\displaystyle s = \frac{\frac{1}{k}+k}{2}$ further gives us $\displaystyle \frac{1}{2k}-\frac{k}{2} \leq \frac{\lambda_s}{d_X} \leq \frac{k}{2}-\frac{1}{2k}\,$,
  and hence $\,\displaystyle \frac{\left|\lambda_s\right|}{d_X} \leq \left|\frac{k}{2}-\frac{1}{2k}\right| = \left|\frac{k^2-1}{2k}\right|$.
  Thus, taking a maximum over all pairwise points and rearranging, we obtain
  \[
    \norm{\Delta_s} \leq \left|\frac{k^2-1}{2k}\right|\norm{D_X}.
  \]
  The result is obtained by the transitivity of our minimum $\Delta$.
\end{proof}
\begin{corollary}
  \label{cor:Hom:Lip_dN}
  If $f:X\rightarrow Y$ is $k$-biLipschitz, then $\displaystyle \D{(X,Y)} \leq \left|\frac{k^2-1}{k}\right|\frac{\norm{D_X}}{\norm{D_Y}}$\,.
\end{corollary}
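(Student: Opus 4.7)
The plan is to chain the two preceding results together, since the statement is essentially a direct corollary of Theorem \ref{thm:Hom:dN_stabil} and Theorem \ref{thm:Hom:Lip_opt_decomp}. Both ingredients are already in place: Theorem \ref{thm:Hom:dN_stabil} gives the general stability bound $\D(X,Y) \leq 2\norm{\Delta}/\diam(Y)$, and Theorem \ref{thm:Hom:Lip_opt_decomp} controls $\norm{\Delta}$ for a $k$-biLipschitz $f$ by $\left|\frac{k^2-1}{2k}\right|\norm{D_X}$. Since $f$ is bijective (biLipschitz maps are injective, and by hypothesis $Y = f(X)$), the metric decomposition framework applies and both theorems can be invoked.

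First I would observe that $\diam(Y) = \norm{D_Y}$ under the max-norm convention adopted in the paper, so Theorem \ref{thm:Hom:dN_stabil} reads $\D(X,Y) \leq 2\norm{\Delta}/\norm{D_Y}$. Next I would substitute the biLipschitz bound on $\norm{\Delta}$ from Theorem \ref{thm:Hom:Lip_opt_decomp} into the numerator, yielding
\[
\D(X,Y) \;\leq\; \frac{2}{\norm{D_Y}}\cdot\left|\frac{k^2-1}{2k}\right|\norm{D_X} \;=\; \left|\frac{k^2-1}{k}\right|\frac{\norm{D_X}}{\norm{D_Y}},
\]
after the factor of $2$ cancels with the $2k$ in the denominator. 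This gives the claimed inequality.

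There is no real obstacle here; the work has already been absorbed into Theorems \ref{thm:Hom:dN_stabil} and \ref{thm:Hom:Lip_opt_decomp}. The only small bookkeeping point is to make sure one uses the \emph{optimal} decomposition $\Delta = \Delta_{s^*}$ (not some arbitrary $\Delta_s$), which is what Theorem \ref{thm:Hom:Lip_opt_decomp} already delivers via the ``transitivity of the minimum,'' and which is exactly what Theorem \ref{thm:Hom:dN_stabil} uses on its right-hand side. So the proof is a single line of substitution.
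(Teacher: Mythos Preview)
Your proposal is correct and matches the paper's own proof, which simply states that the result follows from applying Theorem~\ref{thm:Hom:Lip_opt_decomp} to Theorem~\ref{thm:Hom:dN_stabil}. Your write-up just makes the substitution and the cancellation of the factor of $2$ explicit.
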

\begin{proof}
  The result follows from applying Theorem  \ref{thm:Hom:Lip_opt_decomp} to Theorem \ref{thm:Hom:dN_stabil}.
\end{proof}

\begin{remark}
  \label{rem:altbLip}
  A slightly different definition of biLipschitz functions was used in the work of Mahabadi et al.~\cite{MaMaMaRa2018}.
  They defined the biLipschitz constant of a map $f : X \to Y$ as the minimum $D$ such that for some $\lambda > 0$ we have $\lambda \cdot d_X(x,y) \leq d_Y( f(x), f(y)) \leq \lambda D \cdot d_X(x,y)$ for every $x,y \in X$.
  If we were to work with this definition, arguments similar to those used in Theorem  \ref{thm:Hom:Lip_opt_decomp} will give us the following bound in place of the one in Corollary \ref{cor:Hom:Lip_dN}: 
  \[
    \D{(X,Y)} \leq \left|\lambda(D-1)\right|\frac{\norm{D_X}}{\norm{D_Y}} \, .
  \]
\end{remark}

\section{Computational Experiments} \label{sec:comp}

We present experiments on simulated and real data that highlight the increased effectiveness of using the normalized bottleneck distance, as opposed to regular bottleneck distance, on persistence diagrams in data analysis pipelines.
As a proof of concept, we first apply the nonlinear DR method UMAP to a simple data set in 3D that we generated.
We then apply UMAP to images from the MNIST database of digit images \cite{De2012}. 
In the second experiment, we consider a collection of point clouds sampled from three distinct classes of 3D models that we study using cluster analysis.

\subsection{UMAP on Generated Data} \label{ssec:UMAP}

Uniform manifold approximation and projection (UMAP) is a recently developed nonlinear dimension reduction technique that employs a Riemannian metric that is estimated on the input data \cite{McHeMe2020}.
As such, for choices of parameters that define neighborhoods of points appropriately, one might expect UMAP to preserve homology of the data
and to do so better than linear methods such as JL projection or even multidimensional scaling.
To test this assertion, we apply UMAP to $200$ uniformly sampled points along a saddle boundary in $\R^3$ with radius $100$ and height $100$.
We refer to this original data set as $X$.
Using 100 nearest neighbors in the UMAP setting (and other parameters set at their default values), we reduced this data down to $\R^2$.
We refer to the reduced dataset as $Y$.

\begin{figure}[ht!]
    \centering
    \includegraphics[width=.7\textwidth]{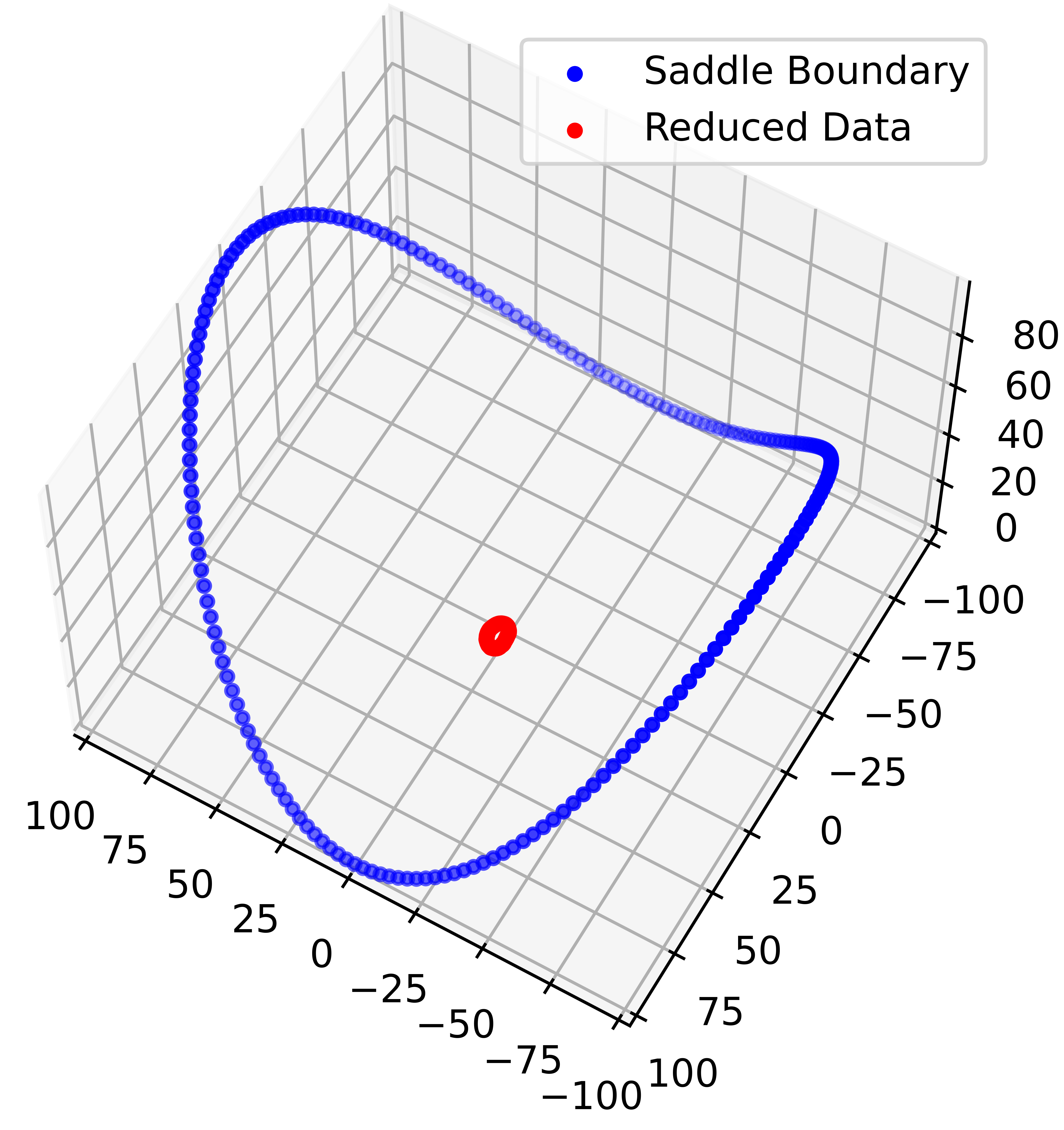}
    \caption{UMAP reduction of a saddle boundary in $\R^3$ to $\R^2$.}
    \label{fig:Hom:UMAP_Saddle}
\end{figure}

We can see in Figure \ref{fig:Hom:UMAP_Saddle} that UMAP does preserve the one-dimensional hole of the saddle boundary.
However, we can also clearly see that the data has been scaled down considerably.
This scaling shows up in the bottleneck distance---computing it directly, we find that $\dB(X,Y) \approx 2.2$.
We get the optimal scalar for the decomposition $\argmin_s \maxnorm{D_X - sD_Y}$ as $s \approx 26$.
Thus it is not surprising that $\dB$ performs poorly.

The normalized bottleneck performs better in this regard, with $\D(X,Y) \approx 0.012$.

\subsection{UMAP on MNIST Digit Images} \label{ssec:UMAPMNIST}

We study a subset of digit images from the MNIST database \cite{De2012} consisting of 1797 images.
Each image has the size of $8 \times 8$ pixels and is represented as a vector in $\R^{64}$.
We refer to this data set as $X$.
We reduce $X$ to $2$ dimensions using UMAP, and refer to the reduced data set as $Y$.
We obtain the default bottleneck distance between the two sets as $\dB(X,Y) \approx 4.34$, which is large due to the high dimensionality of $X$ and the scaling imposed by UMAP.
On the other hand, we obtain the normalized bottleneck distance as $\D(X,Y) \approx 0.056$, capturing the closeness of the two versions of the digits data set in a much better manner.

\subsection{Clustering of Point Clouds of Data using Persistence} \label{ssec:clust}

We study a set of point clouds sampled from the surfaces of three classes of objects: frogs, chairs, and the torus with varying scales.
We sample clouds of 1000 points each, selected randomly from the vertex sets of the meshes of four different frogs and four different chairs (see Figure \ref{fig:frogschairs}) from the Free3D database \cite{free3d}.
Each of these eight meshes has 10,000 vertices (hence, each data set represents a 10\% sample of the original vertex sets).
To this collection of eight data sets, we add four sets of 1000 points each, sampled uniformly at random from the surface of tori at varying scales (see bottom row in Figure \ref{fig:frogschairstoriPCDs}).

\begin{figure}[ht!]
  \includegraphics[scale=0.46]{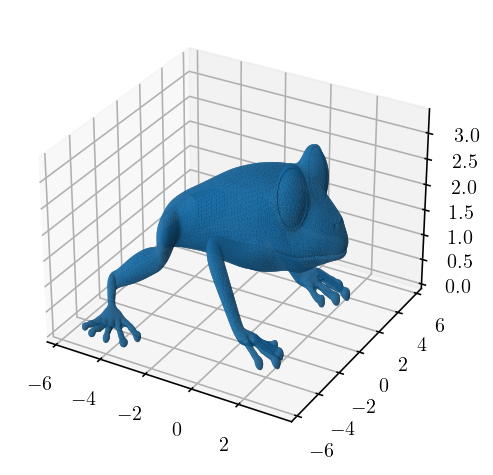}
  \includegraphics[scale=0.46]{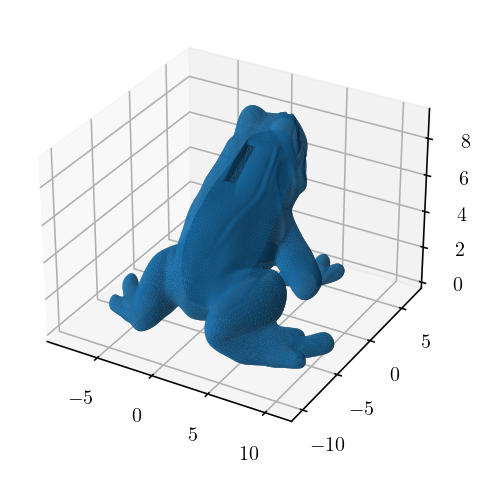}
  \includegraphics[scale=0.46]{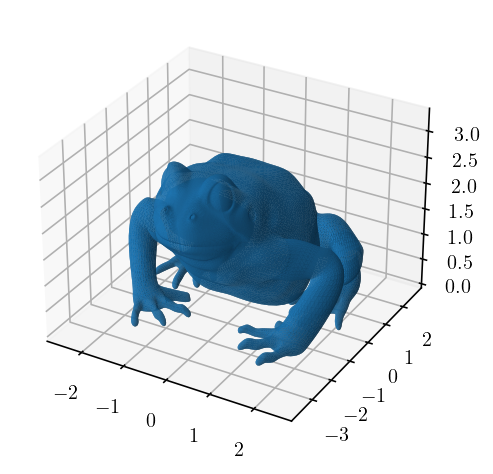}
  \includegraphics[scale=0.46]{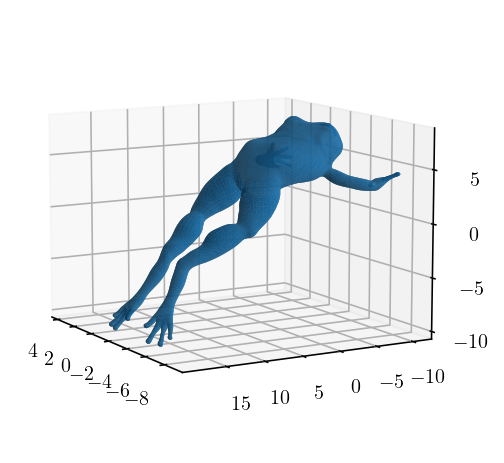}\\
  \includegraphics[scale=0.46]{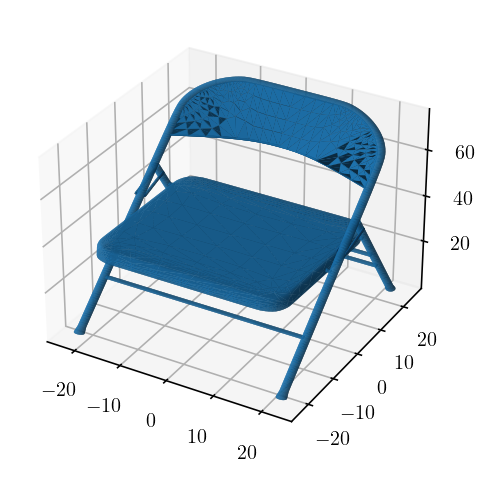}
  \includegraphics[scale=0.46]{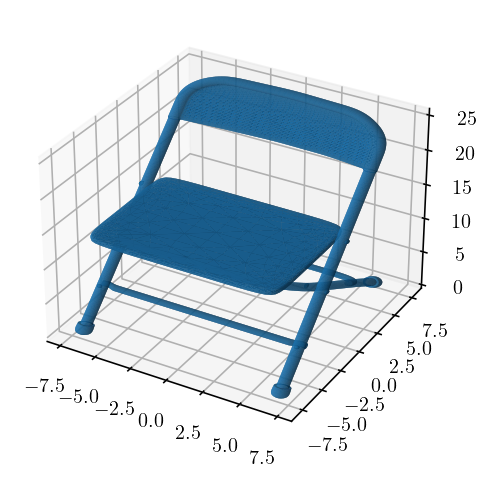}
  \includegraphics[scale=0.46]{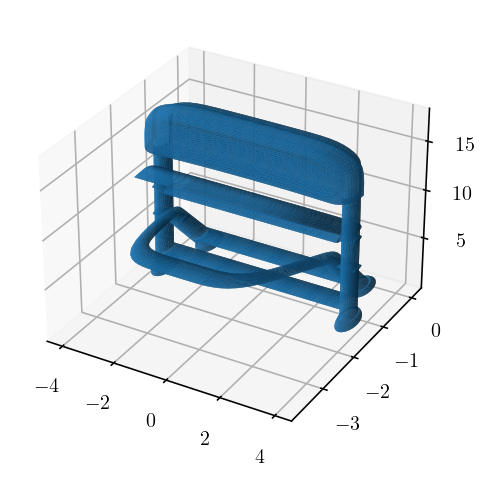}
  \includegraphics[scale=0.46]{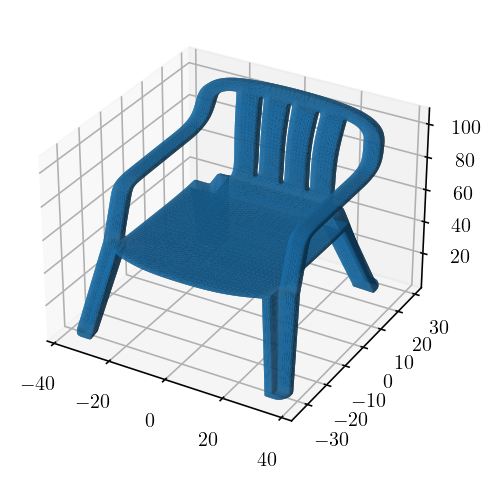}
  \caption{\label{fig:frogschairs}
    Meshes of frogs and chairs from the Free 3D \cite{free3d} data base.
    Note the varying scales across the models from each class.
    The frogs chosen are (from Left to Right): \href{https://free3d.com/3d-model/frog-v1--149240.html}{Frog V1 149420}, \href{https://free3d.com/3d-model/frog-v1--30593.html}{Frog V1 30593}, \href{https://free3d.com/3d-model/banjofrog-v1--699349.html}{Banjofrog V1 699349}, and \href{https://free3d.com/3d-model/frog-v1--64825.html}{Frog V1 64825}.
    The chairs chosen are (from Left to Right): \href{https://free3d.com/3d-model/-folding-chair-metal-v2--311510.html}{Folding Chair Metal V2 311510}, \href{https://free3d.com/3d-model/folding-chairs-v1--612720.html}{Folding Chair V1 612720}, \href{https://free3d.com/3d-model/fold-out-chair-folded-v1--372704.html}{Folded Fold Out Chair V1 372704}, and \href{https://free3d.com/3d-model/monobloc-chair-v1--691935.html}{Monobloc Chair V1 691935}.
  }
\end{figure}

\begin{figure}[ht!]
  \includegraphics[scale=0.46]{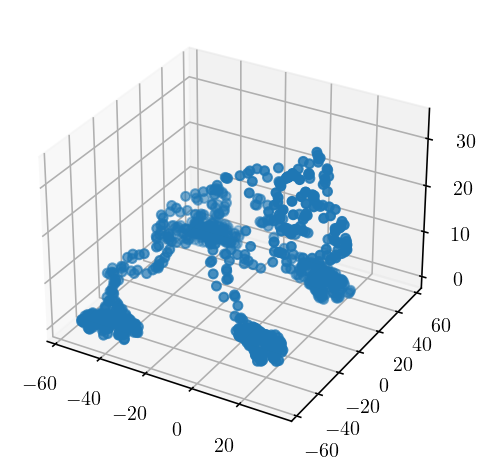}
  \includegraphics[scale=0.46]{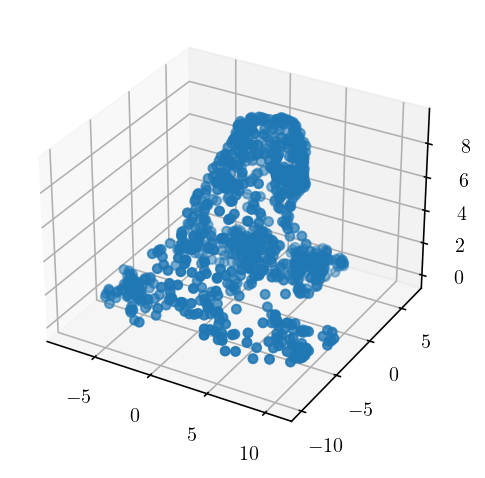}
  \includegraphics[scale=0.46]{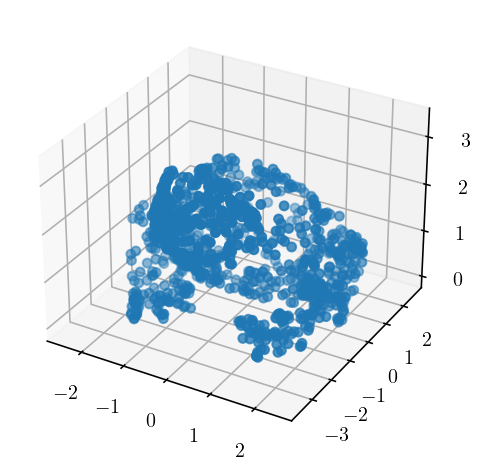}
  \includegraphics[scale=0.46]{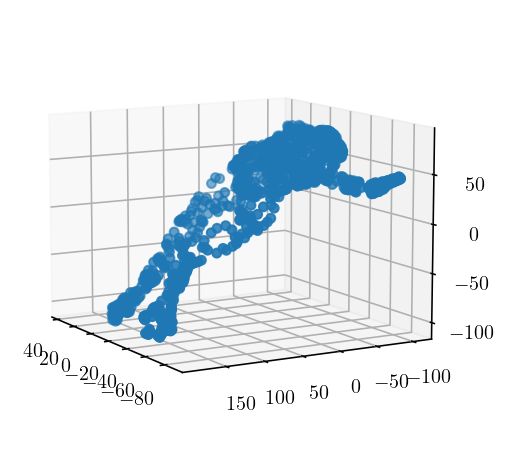}\\
  \includegraphics[scale=0.46]{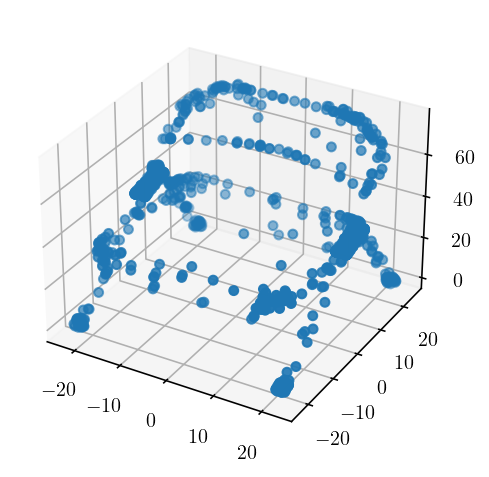}
  \includegraphics[scale=0.46]{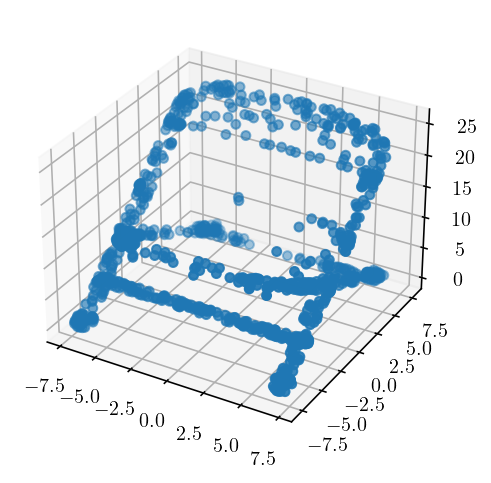}
  \includegraphics[scale=0.46]{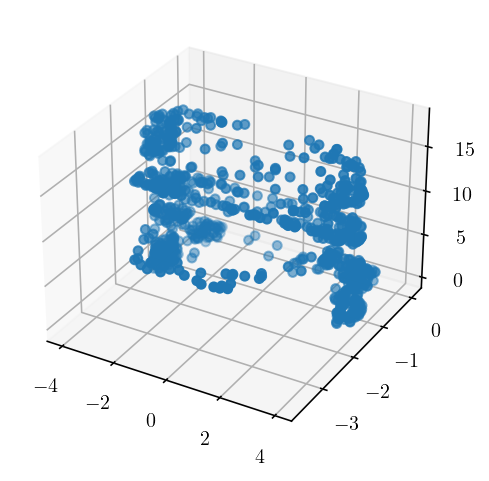}
  \includegraphics[scale=0.46]{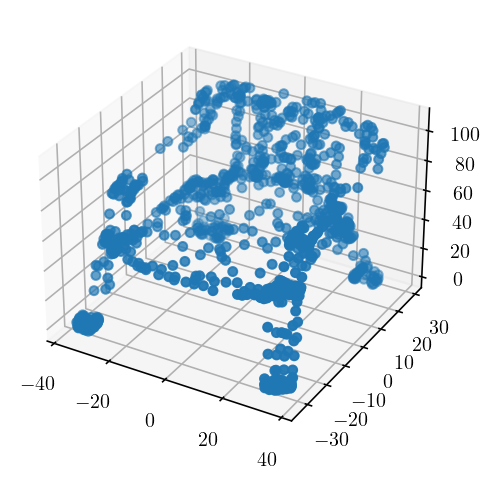}\\
  \includegraphics[scale=0.46]{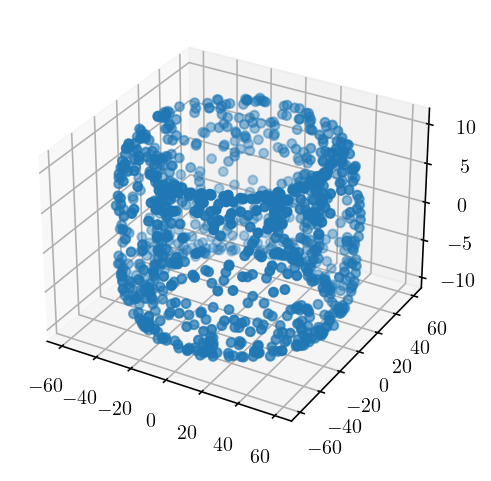}
  \includegraphics[scale=0.46]{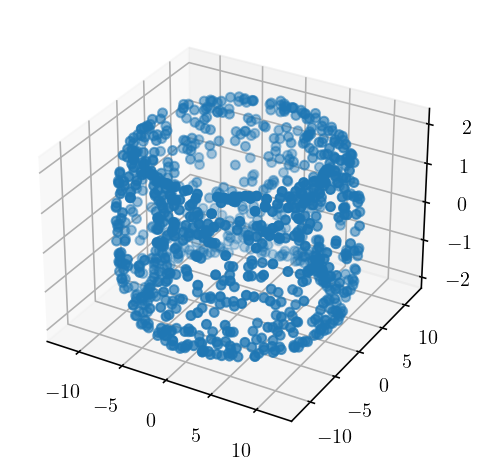}
  \includegraphics[scale=0.46]{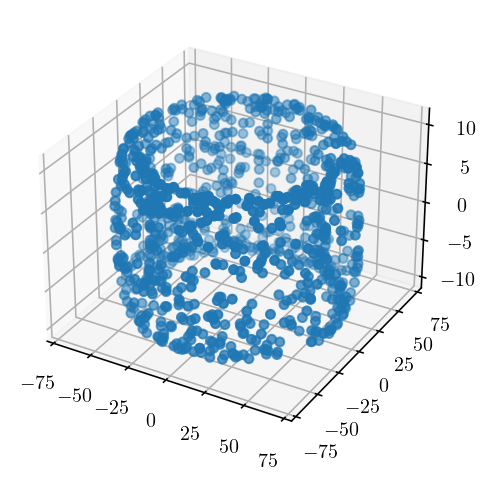}
  \includegraphics[scale=0.46]{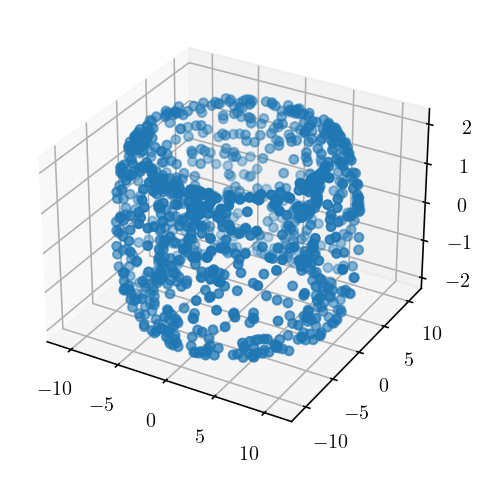}
  \caption{\label{fig:frogschairstoriPCDs}
    Point clouds sampled from the meshes of frogs and chairs shown in Figure \ref{fig:frogschairs} (top two rows) and point clouds sampled from surfaces of tori at varying scales (bottom row).
  }
\end{figure}

  We chose the point clouds from these spaces because the frogs, chairs, and tori have varying $H_1$ homology.
  The frogs are all $2$-spheres (so, $H_1$ homology groups are trivial), the tori have two significant $H_1$ features each, and the chairs each have multiple holes.
  We represent each point cloud by its $H_1$-persistence diagram and use distances between these PDs to cluster the set of point clouds.
  Given the distinct structures of the $H_1$ homology groups across the three classes of spaces, one expects to be able to separate them into three clusters.

  We first compute the regular bottleneck distances $\dB$ between the respective $H_1$ PDs and perform $k$-means clustering with $k=3$ using these distances.
  The result does not produce a clear separation of the three classes of objects (frogs, chairs, and tori).
  To visualize the result, we generated the multidimensional scaling (MDS)  \cite{CoCo2000} projection into 2D of the 12 data sets using these pairwise $\dB$ distances (see Figure \ref{fig:dBMDS}).
  The output from $3$-means clustering is clearly illustrated in this MDS projection---the two tori data sets with larger scales are grouped into isolated clusters of their own while all the remaining 10 data sets (all frogs, all chairs, and the two other tori) are grouped into a third cluster.

  On the other hand, the data sets are clearly separated into three disjoint clusters when the analysis is repeated with $\D$ distances between their $H_1$ PDs---see the MDS projection into 2D in Figure \ref{fig:dNMDS}.
  We can see that the four frog data sets are quite close to each other, which is as expected since they all have no nontrivial $H_1$ features.
  The four chair data sets appear to have higher relative separation than the four tori data sets, which also agrees with the observation that the chair models (in Figure \ref{fig:frogschairs}) all have multiple nontrivial $H_1$ features.

  This computational experiment clearly demonstrates that the normalized bottleneck distance can work as a much more accurate measure of distance when used in data analysis pipelines involving clustering.
  For similar reasons, $\D$ may be a better choice for distances between PDs to be used in machine learning frameworks.

\begin{figure}[ht!]
  \centering
  \includegraphics[scale=0.95]{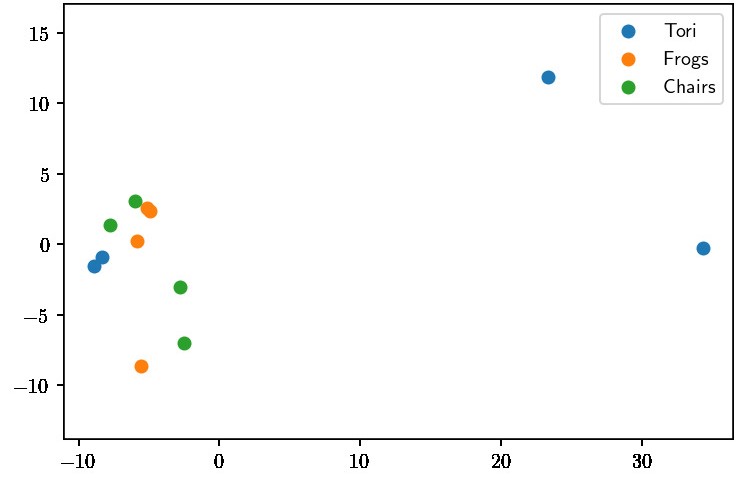}
    \caption{\label{fig:dBMDS}
      MDS projection into 2D of the 12 data sets using $\dB$ distances between their $H_1$ PDs.
      The two tori points on the far right correspond to the first and third data sets in the third row of Figure \ref{fig:frogschairstoriPCDs}, which have bigger scales compared to the other tori data sets.
      $3$-means clustering puts these two tori into isolated clusters of their own with the remaining 10 data sets grouped into the third cluster (as shown by the 10 points on the left).
    }
\end{figure}

\begin{figure}[hb!]
  \centering
  \includegraphics[scale=0.95]{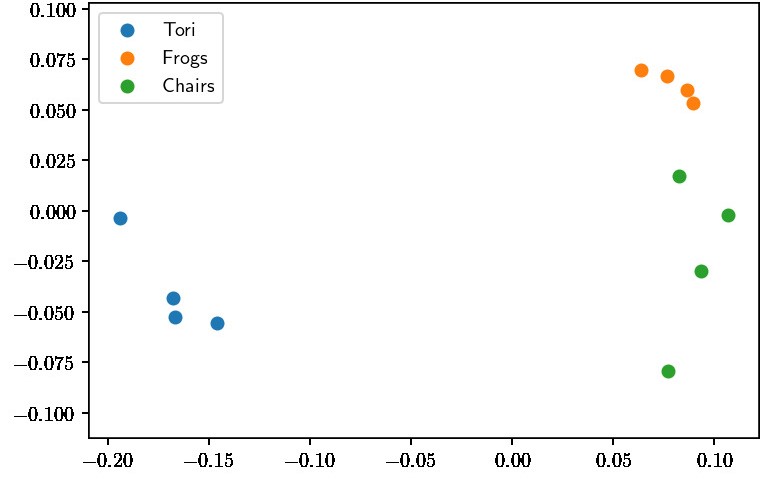}
  \caption{\label{fig:dNMDS}
    MDS projection into 2D of the 12 data sets using $\D$ distances between their $H_1$ PDs.
    The three classes of data sets (tori, frogs, and chairs) are clearly separated into clusters of their own with no overlaps.
  }
\end{figure}

\section{Discussion}

The normalized bottleneck $\D$ may indeed have advantages over the regular bottleneck distance when there is a large degree of scaling between two data sets.
It may further have advantages over the shift-invariant bottleneck distance $\op{d}_{\mathrm{sB}}$ \cite{CaKiSh2018} and the dilation-invariant bottleneck distance \cite{CaVlScKaMo2021} $\overline{\mathrm{d}_{\rm D}}$ (see Equation \ref{defn:Hom:DilBn}).
In practice, we do not have to compute optimal shifts/scalings in order to compute $\D$, which saves a significant amount of computational effort.
Furthermore, $\D$ comes equipped as a pseudometric, allowing values for $\D$ to be compared directly.

\smallskip
However, in some cases the optimal scaling value may itself be useful, in which case $\D$ may not prove more useful.
Furthermore, computing the optimal scaling value may indeed be necessary to obtain a tight bound from Theorem \ref{thm:Hom:dN_stabil}, if a bound may not already be known for $\maxnorm{\Delta}$.
Also, without an explicit optimal scaling value, it is unclear how much relative scaling is removed as opposed to how much scaling is being contributed from the normalization.
Further work is needed to establish a relationship between $\D(X,Y)$ and $\displaystyle \frac{\dB(X,Y)}{\max \{\diam(X),\diam(Y)\}}\,$.

\smallskip
Scaling the metric spaces by their respective diameters to define the normalized bottleneck distance (Definition \ref{defn:Hom:dN}) is arguably the natural choice.
Alternatively, scaling by other meaningful quantities capturing the scale of the metric space, e.g., eccentricity, might yield results that are less sensitive to the largest deviation than the diameter.
At the same time, deriving stability results corresponding to that for $\D$ (in Theorem \ref{thm:Hom:dN_stabil}) could be more challenging when the scaling is done using eccentricity.

\smallskip
For its application\add{s} is dimension reduction, the normalized bottleneck may be examined for a broader range of techniques.
Modern nonlinear reduction techniques such as UMAP \cite{McHeMe2020} and t-SNE \cite{vdMaHi2008} may be of particular interest.
In particular, while we demonstrated the effectiveness of $\D$ in practice for UMAP (see Sections \ref{ssec:UMAP} and \ref{ssec:UMAPMNIST}), it is highly desirable to derive a bound on $\D$ for UMAP similar to the one for mMDS in Corollary \ref{cor:Hom:mMDS_dN}.
Could we derive an explicit bound for $\maxnorm{\Delta}$ for UMAP?

\input{LAMA_Blind_NmlzdBtnkPDsHomolPresDR.bbltex}

\end{document}